\definecolor{trueblue}{RGB}{0, 115, 197} 
\def\b#1{\boldsymbol{#1}} 
\def\bb#1{\underline{\boldsymbol{#1}}} 
\def\A{\mathcal A} 
\def\L{\mathcal L}
\def\II{\mathcal I}
\def\RR{\mathbb R} 
\def\RRR{\underline{\mathbb R}} 
\def\F{F} 
\def\C{\mathcal C}
\def\matrix{M_{\b \alpha}} 
\newtheorem{definition}{Definition}
\title{Multi-Dimensional, Multilayer, Nonlinear and Dynamic HITS
\thanks{The work of FA was supported by EP/M00158X/1 from the EPSRC/RCUK Digital Economy Programme. The work of FT was supported by European Union's Horizon 2020 research and innovation programme under the Marie Sk\l{}odowska-Curie individual fellowship  ``MAGNET'' No 744014. }
}
\author{Francesca Arrigo\thanks{{\tt francesca.arrigo@stath.ac.uk} University of Strathclyde. } \\
\and
Francesco Tudisco\thanks{{\tt f.tudisco@strath.ac.uk} University of Strathclyde.}}
\begin{document}

\maketitle

\begin{abstract}
We introduce a ranking model for temporal multi-dimensional weighted and directed networks based on the Perron eigenvector of a multi-homogeneous order-preserving map. The model extends to the temporal multilayer setting the HITS algorithm and defines five centrality vectors: two for the nodes, two for the layers, and one for the temporal stamps. 
Nonlinearity is introduced in the standard HITS model in order to guarantee existence and uniqueness of these centrality vectors for any network, without any requirement on its  connectivity structure. We introduce a globally convergent power iteration like algorithm for the computation of the centrality vectors. Numerical experiments on real-world networks are performed in order to assess the effectiveness of the  proposed model and showcase the performance of the accompanying algorithm. 
\end{abstract}



\section{Introduction}\label{sec:introduction}
Locating and evaluating relevant components is a central task in data analysis and information retrieval. One of the most successful approaches creates a network of relations from the data, thus translating the original problem into that of quantifying the importance of nodes in a network. This problem can then be tackled using measures of importance for nodes (that do not depend on their labelling) which we will refer to as \textit{centrality measures}. This approach has proven to be very successful, with applications ranging from sorting the results of search engines~\cite{Page98,K99}, to improving the circulation of vehicles in modern cities~\cite{higham2017overview} and extracting and studying the evolution of communities in social networks~\cite{kanawati2014seed}. 
However, data typically have multiple features that may be overlooked by the standard graph representation. Thus, multilayer and temporal networks can be used instead to better capture such features. An explanatory  example is in the analysis of  scientific publications: a standard graph mining approach builds a citation network where nodes are authors citing each other. 
This approach then assigns importance to each node based on this mono-dimensional network structure. 
However, authors publish and cite in different journals which have different levels of importance. 
An alternative approach is thus to build a multilayer citation network where relations have the form: author $i$ publishes a paper in  journal $k$ and cites a paper in journal $\ell$ authored by $j$. 

\def\scr{\footnotesize}
\noindent\begin{minipage}{.45\columnwidth}
\begin{tikzpicture} 
		\node[thick,draw=black,circle,inner sep=4] (a) at (0,0) {};
		\node[above of = a,yshift=-1.5em,xshift=0em] {\scr{author $i$}};
		\node[thick,draw=black,circle,inner sep=4] (b) at (2,-.3) {};
		\node[above of = b,yshift=-1.5em,xshift=0em] {\scr{author $j$}};
		\draw[thick,-{>[scale=1.2]}] (a) to [out=320,in=200] node[yshift=.8em] {\scr{cites}} (b);
\end{tikzpicture}
\end{minipage}
\begin{minipage}{.45\columnwidth}
\begin{tikzpicture}
\node[thick,draw=black,circle,inner sep=4] (a) at (0,0) {};
\node[above of = a,yshift=-1.5em,xshift=-.7em] {\scr{author $i$}};
\node[circle,draw=black,thick,inner sep=17] (ja) at (-.3,0){};
\node[below of = ja,yshift=-.35em,xshift=0em] {\scr{journal $k$}};
\node[thick,draw=black,circle,inner sep=4] (b) at (2,-.3) {};
\node[above of = b,yshift=-1.5em,xshift=0em] {\scr{author $j$}};
\node[circle,draw=black,thick,inner sep=17] (jb) at (2.1,-.3){};
\node[below of = jb,yshift=-.35em,xshift=0em] {\scr{journal $\ell$}};
\draw[thick,-{>[scale=1.2]}] (a) to [out=320,in=200] node[yshift=.8em,xshift=-.1em] {\scr{cites}} (b);
\end{tikzpicture}
\end{minipage}

\noindent This allows for a better representation of the data and also provides the opportunity to measure the importance of both nodes (authors) and layers (journals). 
Moreover, a temporal aspect can be further added to keep track of {\it when} the citation took place. 

The new higher-order graph structure allows the possibility of computing importance scores that take into account multiple data features. 
  This construction, however,  introduces a number of challenges from both the mathematical and the computational point of view. Indeed, tensors are now needed to encode the network structure. Their use leads to a large increase in the problem size as well as to several numerical complications that arise when moving from linear to nonlinear operators. 

Several strategies have been proposed to translate centrality models for monolayer networks to higher-order settings. 
We review some of them in the next \S\ref{sec:related-work}. 
In this paper we focus on extending the well-established HITS algorithm for directed networks~\cite{K99}.  
From a mathematical viewpoint, 
HITS is based on the Perron eigenvector of the adjacency matrix of an undirected, bipartite network built from the directed network under study~\cite{BHM80,BEK13}. 
We extend this idea to the higher-order setting by 
defining a new eigenvector-based centrality measure, which we call \textit{Multi-Dimensional HITS} ({\it MD-HITS}), that assigns scores to nodes, layers and time stamps and is based on the Perron eigenvector of a multi-homogeneous order-preserving map \cite{gautier2017perron,gautier2017tensor}. 
We prove existence and uniqueness 
    of the centrality measure without requiring any connectivity assumption on the underlying  network. 
    This is of paramount importance and sets MD-HITS aside from other eigenvector-based centrality measures (see \S\ref{sec:related-work}), whose existence and uniqueness is  guaranteed only for strongly connected graphs, i.e., for networks whose adjacency matrices/tensors are irreducible, even though irreducibility does not hold for many real-world networks. MD-HITS, on the other hand, is {\it always} computable, regardless of the connectivity structure of the network.

The remainder of the paper is organized as follows. 
In \S\ref{sec:related-work} we review relevant related work. 
After reviewing the HITS algorithm for monolayer networks, in \S\ref{sec:model} we describe MD-HITS centrality and prove its existence, uniqueness, and maximality. 
In \S\ref{sec:algorithm} we describe the accompanying algorithm and prove its convergence.  
Numerical experiments on real-world networks are discussed in \S\ref{sec:experiments}.

\subsection{Related work}\label{sec:related-work}
Recent years have witnessed a growth in the number of 
centrality measures for the setting of multilayer networks. 
The framework that has drawn the most attention is probably that of multiplex networks: static multilayer networks where all the layers contain the same set of nodes and connections are allowed within layers but not across.  
For networks of this type, several centrality measures have been defined for the case of undirected layers. 
Eigenvector-type centrality measure were proposed in~\cite{BNL14,SRC13} following two complementary approaches: either by first computing the eigenvector centrality of the nodes in each layer and then aggregating the results, or by first aggregating the adjacency matrices of the different layers  and then computing the eigenvector centrality of the resulting network. 
The concept of eigenvector versatility was introduced in \cite{DSOGA15}. Here, the  multiplex network is embedded into a larger vector space and is represented by means of a supra-adjacency matrix that encodes information from all the layers.  
These approaches only define centrality vectors for nodes and do not address the problem of assigning scores to layers, which is tackled, e.g., in~\cite{tudisco2017node, sole2014centrality,rahmede2017centralities,zhou2007co,deng2009generalized,ng2011multirank,li2012har}.  
In \cite{rahmede2017centralities} the importance of both nodes and layers is defined via a parametric two-way recurrence built from the adjacency matrices of the individual layers. 
Multi-dimensional PageRank and HITS ranking methods for multiplex graphs based on third order adjacency tensors and their eigen and singular vectors have been proposed in \cite{zhou2007co,deng2009generalized,ng2011multirank,li2012har}, where  ``nodes" and ``layers" are sometimes referred to as ``objects" and ``relations". A related approach, designed for hypergraphs rather than multiplexes, is also discussed in \cite{benson2018three}.
In \cite{zhou2007co,deng2009generalized} the authors introduce Co-HITS: a model proposed to address the case of bipartite graphs where the content information and the relevance constraints come from both sides of the bi-partition. Following up on this approach, MultiRank and HAR algorithms were introduced in \cite{ng2011multirank,li2012har} as  multi-dimensional versions of PageRank and HITS respectively. Both these models define the importance of objects and relations in multi-relational data in terms of $Z$-eigen or singular vectors of normalized adjacency tensors. 
These centrality measures are well-defined only under restrictive assumptions on the connectivity of the network; specifically, they require all the individual layers to be strongly connected.  
Other improvements in the analysis of important components in multi-dimensional data are based on tensor factorization; see, e.g., \cite{sun2005cubesvd,kolda2005higher,rendle2009learning}. In \cite{li2016multivcrank} a tensor-based ranking scheme is applied to hypergraphs in order to develop a multi-visual-concept ranking scheme for image retrieval. 

To the best of our knowledge, none of these models is well-defined in the case of disconnected networks. Moreover, most of them cannot treat temporal networks or networks that contain edges across the layers.
In contrast, we compute the centrality of components in networks that may include inter-layer connections and that are allowed to change over time. 
The proposed approach builds on the concept of multi-homogeneous map \cite{gautier2017perron} and -- unlike other models -- is always well-defined and easily computable via a simple and globally convergent algorithm.

\section{The model}\label{sec:model}
A static monolayer network can be described as a set of nodes $V = \{1, \dots, n_V\}$ and a set of edges between them. 
Equivalently, it can be represented via its adjacency matrix $A=(A_{ij})\in\mathbb{R}^{n_V\times n_V}$, where $A_{ij}=\omega_{ij}>0$ is a  weight that quantifies the strength of edge $i\to j$, if present, and $A_{ij}=0$ otherwise.

Moving up in dimensionality, we can define a \textit{multilayer} network as a triplet: a set of nodes $V$, a set of layers $L = \{1, \dots, n_L\}$ on which these nodes ``live'', and a set of edges. 
Connections may exist between different nodes both within  and across layers. 
We further say that a network is a \textit{temporal} multilayer network if the edges are assigned a time label $t$, within a time window that we will assume discrete  $T = \{1, \dots, n_T\}$.
Thus, each edge in such a network is identified by two nodes, two layers and the time when the interaction takes place: 
node $i$ on layer $\ell$ connects to node $j$ on layer $k$ at time $t$.
Consequently, any temporal multilayer network can be represented via an \textit{adjacency tensor}~ $\A = (\A_{ij\ell kt})$~entry-wise~defined~as 
$$
\A_{ij\ell kt} = \begin{cases}
\omega_{ij\ell kt} &i \text{ on layer } \ell   \rightarrow j \text{ on layer } k, \text{ at time } t\\
0 & \text{otherwise}
\end{cases}
$$
for any $i,j \in V$, $\ell,k\in L$ and $t\in T$ and where, as before, $\omega_{ij\ell kt}>0$ quantifies the strength of the corresponding edge. 
A multilayer network is directed if there is at least one edge that is not reciprocated, i.e., 
if there exists at least a tuple of indices $(i,j,k,\ell,t)$ such that $\A_{ij\ell kt}\neq\A_{jik\ell t}$.
In the following, we refer to nodes, layers and time stamps as the \textit{components} of the network.

\subsection{HITS}\label{sec:monolayer-hits}
The standard HITS model for monolayer networks defines two types of importance for nodes: the {\it hub} and the {\it authority} scores. 
The former evaluates the importance of a node as a ``broadcaster'' whereas  the latter accounts for its relevance as a ``receiver'' of information. These two notions are related through mutually-reinforcing recursive relations: the importance $h_i\geq 0$ of node $i$ as a hub is proportional to the sum of the authority scores $a_j$ of all the nodes $j$ node $i$ points to. 
Vice-versa, the importance $a_i\geq 0$ of $i$ as an authority is proportional to the sum of all the hub scores $h_j$ of nodes $j$ that point to $i$. 
Using the adjacency matrix, we can describe these relations as 
\begin{equation}\label{eq:hits-monolayer}
    \lambda_1 h_i = \sum_j A_{ij}a_j \quad \text{ and } \quad \lambda_2 a_i = \sum_{j}A_{ji} h_j\, 
\end{equation}
for $i\in V$ and scalars $\lambda_1,\lambda_2>0$. 
If we let $\b h =(h_i)\in\RR^{n_V}$ and $\b a = (a_i)\in\RR^{n_V}$ 
then~\eqref{eq:hits-monolayer} rewrite as $\lambda_1 \b h = A\b a$ and $\lambda_2 \b a = A^T \b h$. 

\subsection{Multi-Dimensional HITS}
We propose an extension of the HITS model  to the framework of temporal directed multilayer networks. 
 Here relations occur at different time stamps and both within and across layers, thus making both nodes  and layers  play the roles of spreaders and gatherers of information. For this reason, 
 we consider two vectors of centrality for nodes: a vector $\b h \in \RR^{n_V}$ of hub scores and a vector $\b a\in\RR^{n_V}$ of authority scores, and two vectors that account for the broadcasting and receiving capability of layers: vectors $\b b \in\RR^{n_L}$ and $\b r\in\RR^{n_L}$, respectively. 
 Finally, time stamps inherit importance from the  relationships occurring at that time stamp and thus we define one vector $\b \tau \in \RR^{n_T}$ that encodes their importance. 
All these vectors are nonnegative and normalized so that their largest entry is 1; with this convention, each centrality score can be interpreted as a ``fraction of importance''.
We denote by $\bb c = (\b h, \b a, \b b, \b r, \b \tau) \in \RRR $ the tuple containing the five centrality vectors, where 
$
\RRR := \RR^{n_V}\times \RR^{n_V} \times \RR^{n_L}\times \RR^{n_L}\times \RR^{n_T}
$ and with the convention $\b c_1=\b h$, $\b c_2=\b a$, $\dots$, $\b c_5 =\b \tau$.

Hub and authority scores for nodes and layers are defined via mutually-reinforcing recursive relationships that involve the five vectors as follows. 
A node receives a high hub score if, at important time stamps, it originates several edges that leave it from important layers (in the sense of broadcast centrality) to reach authoritative nodes that lie on layers that have high receive centrality.
Similarly, a node receives a high authority score if, at important time stamps, it is the target -- on highly authoritative layers -- of many edges that originate from nodes that have a high hub score and lie on layers with high broadcast centrality. 
The broadcast and receive centrality indices for layers are formally defined in an analogous way. 
Finally, a time stamp is considered to be important if several links leave important nodes (in the sense of hub centrality) from layers with high broadcast centrality to target authoritative nodes on authoritative layers. 
These recursive relationships can be formalized by describing the entries of the tuple $\bb c$ in terms of the unique (normalized) Perron eigenvector of a suitable 
multi-homogeneous map \cite{gautier2017perron}, which we shall 
call $\F^{\b \alpha}_\A$, defined from the adjacency tensor of the multilayer network $\A$. 
Let us first consider the map  $\F_{\A}=(f_1,\dots, f_5):\RRR\to \RRR$  that acts on a tuple $\bb x = (\b x_1,\b x_2,\b x_3, \b x_4, \b x_5)\in\RRR$ as 
$$
\bb x \mapsto \F_{\A}(\, \bb x\,) = (f_1(\, \bb x\, ), \dots, f_5(\, \bb x\,)).
$$
The mappings $f_1,f_2:\RRR \to \RR^{n_V}$, $f_3,f_4:\RRR\to\RR^{n_L}$ and $f_5:\RRR\to\RR^{n_T}$ are particular tensor-vector products and define the ``slices'' of the multi-dimensional map $\F_\A$. Precisely, the $i_s$-th entry of  $f_s( \, \bb x \,)$ is defined by 
\begin{align*}
    \!\!\!
    \sum_{{\footnotesize \begin{array}{c}i_1,\dots,i_{s-1},\\ i_{s+1},\dots,i_5\end{array} }}\!\!\!\!\!\!
\A_{i_1\dots i_5}(\b x_1)_{_{i_1}}\!\cdots (\b x_{s-1})_{_{i_{s-1}}}\! (\b x_{s+1})_{_{i_{s+1}}}\!\cdots (\b x_{5})_{_{i_5}}
\end{align*}
for $s=1,\dots, 5$, where, in the above summations, $i_1,i_2\in\{1, \dots, n_V\}$, $i_3,i_4\in\{1, \dots, n_L\}$ and $i_5\in\{1, \dots, n_T\}$. 
Following~\cite{gautier2017perron}, we say that $\bb x\in \RRR$ is an eigenvector for $\F_\A$ with eigenvalue $\b \mu \in \RR^5$, if 
\begin{equation}\label{eq:multi-hits-noalpha}
\F_\A(\bb x) = \b \mu \otimes \bb x
\end{equation}
where $\b \mu = (\mu_1, \dots, \mu_5)$ and 
$\b \mu \otimes \bb x =(\mu_1 \b x_1, \dots, \mu_5 \b x_5)$.  

With this notation, the relationships that define the components of $\bb c$ can be rewritten in terms of a nonnegaitve eigenvector of $\F_{\A}$, namely $\F_\A(\bb c) = \b \lambda \otimes \bb c$
where $\b \lambda = (\lambda_1,\dots,\lambda_5)$ is a positive vector.  
However, in this way, the centrality $\bb c$  may not be well defined, as existence and uniqueness of a nonnegative eigenvector of $\F_\A$  cannot be  ensured for a general adjacency tensor $\A$. 
To avoid this critical drawback, we consider the following modification of  $\F_{\A}$: 
$$
\F_{\A}^{\b \alpha}(\,\bb x\,)=(f_1(\, \bb x \,)^{\alpha_1},\dots,   f_5(\,\bb x\,)^{\alpha_5})\, ,
$$
where $\b \alpha = (\alpha_1, \dots, \alpha_5)$ is such that $0<\alpha_i\leq 1$ for all $i=1, \dots, 5$, and the $\alpha_s$-th power of the vector $f_s(\, \bb x \,)$ is understood entry-wise 
$$f_s( \, \bb x \,)^{\alpha_s} = (f_s( \, \bb x \,)_{_{1}}^{\alpha_s}, \dots, f_s( \, \bb x \,)_{_{n_s}}^{\alpha_s}),\,\, s=1,\dots,5,$$
with $n_1 = n_2 = n_V$, 
$n_3 = n_4 = n_L$,  
and $n_5 = n_T$. 

We can now proceed with the definition of the multi-dimensional HITS centrality. 
\begin{definition}\label{def:multi-hits} 
Let $\A$ be the adjacency tensor of a temporal multilayer network  and let $\b \alpha = (\alpha_1,\dots, \alpha_5)$ be  such that $0<\alpha_s \leq 1$ for all $s=1, \dots, 5$. The   Multi-Dimensional HITS ({\it MD-HITS}) centrality  $\bb c = (\b h, \b a, \b b, \b r, \b \tau)\in \RRR$ is an entry-wise  nonnegative eigenvector of $F_\A^{\b \alpha}$, such that $\|\b c_1\|_\infty=\dots=\|\b c_5\|_\infty=~1$,i.e.\
\begin{equation}\label{eq:multi-hits}
\F_{\A}^{\b \alpha}(\bb c) = \b \lambda \otimes \bb c
\end{equation}
for some positive vector $\b \lambda=(\lambda_1,\dots, \lambda_5)$. 
\end{definition}
Note that \eqref{eq:multi-hits} generalizes \eqref{eq:multi-hits-noalpha}, since $F_\A^{(1,\ldots,1)} = F_\A$.  
Also note that we require the normalization condition $\|\b c_s\|_\infty=1$  in order to ensure the interpretation of centrality scores as  fraction of importance. We will see in the next section that, for a large range of parameters $\b \alpha$, MD-HITS centrality defined above exists, is unique and satisfies a maximality property analogous to that  of the Perron singular vectors of a nonnegative matrix. 
Moreover, we will describe the conditions under which the newly introduced centrality vectors have {\it positive} entries. 
\subsection{Existence, uniqueness and maximality of MD-HITS}
Let us start by pointing out that the vectors in $\bb c$ from Definition~\ref{def:multi-hits} may have zero entries. 
From~\eqref{eq:multi-hits} it is readily seen that this is the case when, for example, a node $i$ does not have outgoing edges from any layer and at any time stamp; indeed, in this case we  have  $\A_{ij\ell kt}=0$ for every $j \in V$, $\ell,k\in L$ and $t\in T$, and hence  $f_1(\, \bb x \,)_{_i} = 0$ for every $\bb x\in \RRR$. 
Our model then correctly assigns $h_i = 0$ to node $i$, since it is inactive as a hub in the multilayer. 
The same reasoning applies to inactive authority nodes, broadcast/receive layers, and time stamps, which will thus be appropriately assigned $a_j=b_\ell= r_k = \tau_t=0$. 
In any other situation, we want the centrality score of a component to be strictly positive, i.e., we want $\bb c= (\b h, \b a, \b b, \b r, \b \tau)\in\C_\A$, where 
$$
\mathcal C_{\A} \!= 
\left\{ \!\!\!\!\begin{array}{ll} 
\bb x \in \RRR : & \!\!\!\!\! \|\b x_s\|_\infty = 1,  \text{ for all }s=1,\dots,5, \text{ and }\\
                 & \!\!\!\!\!(\b x_s)_{_{i_s}} \!= 0  \, \text{ if } \sum_{\!\!\!\!{\scriptsize \begin{array}{c}i_1,\!...,i_{s-1},\\ i_{s+1},\!...,i_5\end{array} }}\A_{i_1,\dots,i_5}=0,  \\
                 & \!\!\!\!\!\text{or } (\b x_s)_{_{i_s}} > 0  \text{ otherwise.}
\end{array}\!\!\!\!\right\}
$$
Theorem \ref{thm:existence-uniqueness} below shows that, for any nonempty temporal multilayer network, MD-HITS centrality  
exists, is unique, and belongs to $\mathcal C_{\A}$ for appropriate $\b\alpha$. 
\begin{theorem}\label{thm:existence-uniqueness}
Let $\b \alpha = (\alpha_1,\ldots,\alpha_5)>0$ be such that $\rho(M_{\b \alpha})<1$, where $\rho(M_{\b \alpha})$ is the spectral radius of 
$$
\matrix = 
\begin{bmatrix}
0        & \alpha_2 & \alpha_3 & \alpha_4 & \alpha_5 \\
\alpha_1 & 0        & \alpha_3 & \alpha_4 & \alpha_5 \\
\alpha_1 & \alpha_2 & 0        & \alpha_4 & \alpha_5 \\
\alpha_1 & \alpha_2 & \alpha_3 & 0        & \alpha_5 \\
\alpha_1 & \alpha_2 & \alpha_3 & \alpha_4 & 0
\end{bmatrix}.
$$
If $\A$ is not the zero tensor, then there exist a unique $\bb c\in \mathcal C_{\A}$ and a unique $\b \lambda = (\lambda_1, \dots, \lambda_5)>0$ such that $\F_\A^{\b \alpha}(\bb c)=\b \lambda \otimes \bb c$. 
Moreover, there exists  
$\b \beta = (\beta_1, \dots, \beta_5)>0$ such that, 
$\lambda_1^{\beta_1}\cdots \lambda_5^{\beta_5} \geq |\mu_1^{\beta_1}\cdots\mu_5^{\beta_5}|$ for any other eigenvalue $\b \mu \in \RR^5$ of $\F_\A^{\b \alpha}$.  
\end{theorem}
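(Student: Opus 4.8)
The plan is to recast the eigenvalue problem \eqref{eq:multi-hits} as a fixed-point problem for an order-preserving, multi-homogeneous map and then invoke the nonlinear Perron--Frobenius theory of \cite{gautier2017perron}. First I would verify that each slice $f_s$ is homogeneous of the right degrees: since $f_s$ is a tensor-vector product contracting all coordinates except the $s$-th, it is homogeneous of degree $1$ in each $\b x_j$ with $j\neq s$ and degree $0$ in $\b x_s$. Composing with the entry-wise power $(\cdot)^{\alpha_s}$ multiplies all these degrees by $\alpha_s$, so the $(r,s)$ entry of the homogeneity-degree matrix of $\F_\A^{\b\alpha}$ is exactly $\alpha_s$ for $r\neq s$ and $0$ for $r=s$ --- that is, precisely $\matrix^{T}$ (or $\matrix$, up to the transpose convention in \cite{gautier2017perron}). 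I would also note that $\F_\A^{\b\alpha}$ is order-preserving on the nonnegative cone, being a composition of order-preserving maps. The spectral-radius condition $\rho(\matrix)<1$ is then exactly the contraction hypothesis under which the Thompson-metric (or Hilbert projective metric) argument of \cite{gautier2017perron} guarantees a unique eigenvector in the interior of the relevant face of the cone, together with a simple, dominant eigenvalue.

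The second step handles the fact that $\A$ need not be irreducible, so the eigenvector may live on a proper face of $\RRR_{\geq 0}$ rather than in the strict interior. Here I would exploit the structural observation already made before the theorem: the index set splits into the ``inactive'' coordinates (those $i_s$ for which the corresponding marginal sum of $\A$ vanishes, forced to be $0$ by \eqref{eq:multi-hits}) and the complementary ``active'' coordinates. Restricting $\F_\A^{\b\alpha}$ to the subspace supported on the active coordinates yields a map of the same multi-homogeneous type whose underlying tensor has strictly positive marginals; on that restricted cone one applies \cite{gautier2017perron} to get a unique eigenvector with \emph{strictly positive} active entries and a unique positive eigenvalue tuple $\b\lambda$. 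Re-padding with zeros gives $\bb c\in\mathcal C_\A$, and uniqueness in $\mathcal C_\A$ follows because $\mathcal C_\A$ pins down precisely which coordinates must vanish. The normalization $\|\b c_s\|_\infty=1$ then fixes the scaling uniquely. One should check that the restricted problem is still nontrivial whenever $\A\neq 0$: if $\A$ has any nonzero entry, at least one marginal in each of the five modes is positive, so the active index set is nonempty in every mode and the restricted tensor is genuinely nonzero.

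For the maximality claim, the vector $\b\beta>0$ is the Perron eigenvector of $\matrix^{T}$ (or a related positive left eigenvector furnished by the Perron--Frobenius theorem applied to $\matrix$, which is nonnegative and, when all $\alpha_i>0$, irreducible). The weighted geometric mean $\lambda_1^{\beta_1}\cdots\lambda_5^{\beta_5}$ is exactly the quantity that behaves like a genuine Perron eigenvalue: under the logarithmic change of variables $\b y_s=\log\b x_s$ the map $\F_\A^{\b\alpha}$ becomes an additively order-preserving map that is, up to the contraction factor $\rho(\matrix)<1$, nonexpansive in a $\b\beta$-weighted sup-norm, and the dominance $\lambda_1^{\beta_1}\cdots\lambda_5^{\beta_5}\geq|\mu_1^{\beta_1}\cdots\mu_5^{\beta_5}|$ for every other eigenvalue tuple $\b\mu$ is the standard spectral-gap consequence in that framework (cf.\ the Collatz--Wielandt / cone-spectral-radius characterization in \cite{gautier2017perron}). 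I expect the main obstacle to be the reduction to the active subspace: one must argue carefully that the contraction hypothesis $\rho(\matrix)<1$ is inherited by the restricted map (it is, since the homogeneity matrix of the restriction is again $\matrix$, independent of which coordinates survive) and that no spurious eigenvectors with support straddling the active/inactive partition exist --- this last point again uses \eqref{eq:multi-hits} directly, since a zero active-marginal forces the coordinate to zero and a positive one forbids it. The rest is bookkeeping plus a direct citation of \cite{gautier2017perron}.
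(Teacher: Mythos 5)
Your proposal is correct and follows essentially the same route as the paper: a contraction argument in a $\b\beta$-weighted Hilbert-type projective metric on $\mathcal C_{\A}$ (with $\b\beta$ the Perron eigenvector of the irreducible nonnegative matrix $\matrix$, whose spectral radius is the Lipschitz constant via Collatz--Wielandt), Banach's fixed point theorem for existence and uniqueness, and a citation of \cite{gautier2017perron} for maximality. The only difference is presentational: the paper writes out the contraction estimate explicitly and builds the zero pattern into the metric (taking maxima over $i_s\notin O_s$) rather than restricting to the active coordinates and re-padding, but these are equivalent since every nonzero entry of $\A$ has all five of its indices active.
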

\begin{proof}
The matrix $\matrix$ is irreducible and nonnegative, since $\alpha_s>0$ $\forall s$, and hence there exists a unique positive vector $\b \beta$ such that $\matrix \b \beta = \rho(M_{\b \alpha})\b \beta$ and $\sum_i \beta_i = 1$. 
Moreover, note that, by definition, any vector $\bb x\in \mathcal C_{\A}$ has a prescribed zero pattern forced by the adjacency tensor $\A$, i.e., there exist five sets of indices $O_1,O_2 \subseteq V$, $O_3, O_4\subseteq L$ and $O_5 \subseteq T$ such that 
$(\b x_s)_{_{i_s}} = 0$ if and only if $i_s\in O_s$, for $s=1, \dots, 5$. 
For any two $\bb x, \bb y\in \mathcal C_{\A}$, define the map
\begin{equation}\label{eq:hilbert-metric}
    d_H(\,\bb x, \, \bb y\,) := \sum_{s=1}^5 \beta_s \log \left( \max_{i_s\notin O_s}\frac{(\b x_s)_{_{i_s}}}{(\b y_s)_{_{i_s}}} \max_{i_s\notin O_s}\frac{(\b y_s)_{_{i_s}}}{(\b x_s)_{_{i_s}}}   \right)\, .
\end{equation}
This is a form of higher-order Hilbert metric such that the pair $(\mathcal C_{\A}, d_H)$ is a complete metric space (see, e.g., \cite[Prop.\ 2.5.4]{BookLN}). 
The map $d_H$ is a projective metric, i.e., it is invariant under scaling along any of the dimensions.   
Hence, $d_H(\, G(\,\bb x\,),\, G(\,\bb y\,)\, ) = d_H(\, F_{\A}(\,\bb x\,),\, F_{\A}(\,\bb y\,)\, )$ for any $\bb x, \bb y \in \mathcal C_{\A}$, where  
\begin{equation}\label{eq:G}
    \textstyle{G(\, \bb x\, ) = \left(\frac{f_1(\, \bb x\,)}{\|f_1(\,\bb x\,)\|_\infty}, \dots, \frac{f_5(\, \bb x\,)}{\| f_5(\,\bb x\,)\|_\infty}\right)}\, .
\end{equation}
Now note that for every $\b \mu = (\mu_1,\dots,\mu_5)>0$, every $\bb x\in \mathcal C_{\A}$, and every $s=1,\dots,5$ we have the following homogeneity equality 
$$f_s(\b \mu \otimes \bb x) = \frac{\mu_1^{\alpha_1}\cdots \mu_5^{\alpha_5}}{\mu_s^{\alpha_s}}f_s(\bb x)\, ,$$
where 
$\b \mu \otimes \bb x= (\mu_1\b x_1, \dots, \mu_5\b x_5)$. 
Moreover, for any $\bb x, \bb y\in\mathcal C_{\A}$, entry-wise we have $\b \gamma \otimes \bb y \leq \bb x \leq \b \delta \otimes \bb y$ where $\delta_s = \max_{i_s\notin O_s} (\b x_s)_{i_s}/(\b y_s)_{i_s}$ and $\gamma_s = \min_{i_s\notin O_s} (\b x_s)_{i_s}/(\b y_s)_{i_s}$. Therefore,   the following inequalities hold
$$
\frac{\gamma_1^{\alpha_1}\cdots \gamma_5^{\alpha_5}}{\gamma_s^{\alpha_s}}f_s(\bb y)\leq f_s(\bb x) \leq \frac{\delta_1^{\alpha_1}\cdots \delta_5^{\alpha_5}}{\delta_s^{\alpha_s}}f_s(\bb y)
$$
for any $s=1, \dots, 5$. We deduce that 
$$
d_H(\, F_{\A}(\,\bb x\,),\, F_{\A}(\,\bb y\,)\, ) \leq \left(\max_{s=1,\dots,5}\frac{(\matrix \b \beta)_s}{\beta_s}\right)  d_H(\,\bb x,\, \bb y\,)\, 
$$
for every $\bb x, \bb y \in \mathcal C_{\A}$. 
This, together with the Collatz--Wielandt formula for nonnegative matrices (see, f.i., \cite[Cor.\ 8.1.31]{BookHJ}) implies 
\begin{equation}\label{eq:CW}
    d_H(\, G(\,\bb x\,),\, G(\,\bb y\,)\, ) \leq \rho(M_{\b \alpha}) \,  d_H(\,\bb x,\, \bb y\,)\, .
\end{equation}
Finally, as $\rho(M_{\b \alpha})<1$, the above inequality implies that
the map $G$ is a strict Lipshitz contraction on the complete metric space $(\mathcal C_{\A}, d_H)$. 
By the Banach fixed point Theorem, there exists a unique ${\bb c}\in\mathcal C_{\A}$ such that $G(\, {\bb c}\, ) = {\bb c}$. 
By definition of $G$ this implies that there exists a unique $\b \lambda > 0$  such that $F_\A^{\b \alpha}(\bb c) = \b \lambda \otimes \bb c$ holds.   
The proof of the maximality of $\b \lambda$, i.e., of the fact that $\lambda_1^{\beta_1}\cdots \lambda_5^{\beta_5} \geq |\mu_1^{\beta_1}\cdots\mu_5^{\beta_5}|$ for any $\b \mu\in\RR^5$ such that $\F_{\A}(\, \bb x \,)=\b \mu \otimes \bb x$ holds  for some $\bb x \in \RRR$, follows directly from \cite[Thm.\ 4.1]{gautier2017perron} and is omitted here.
\end{proof}

Existence and uniqueness of MD-HITS centrality is thus ensured when $\b\alpha=(\alpha_1,\ldots, \alpha_5)$ is such that $\rho(M_{\b\alpha})<1$.
The following result provides a criterion for the selection of $\b \alpha$. 

\begin{theorem}\label{thm:hom-matrix}
Let $\b\alpha$ be such that $0<\alpha_s \leq 1$,~for~$s = 1,\ldots,5$.~If 
$(\alpha_1 + \cdots + \alpha_5) - \min_s\alpha_s \leq 1, $ 
then either $\rho(M_{\b\alpha}) < 1$ or $\rho(M_{\b\alpha})=1$ and $\alpha_s = 1/4$ for all $s$.
\end{theorem}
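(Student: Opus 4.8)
The plan is to reduce the spectral-radius condition on $\matrix$ to a one-dimensional secular equation and then control that equation via the concavity of $t\mapsto t/(1+t)$. First I would record the structural identity $\matrix = \mathbf{1}\,\b\alpha^{T} - \mathrm{diag}(\b\alpha)$, with $\mathbf{1}=(1,\dots,1)^{T}\in\RR^{5}$ and $\b\alpha=(\alpha_{1},\dots,\alpha_{5})^{T}$. By the proof of Theorem~\ref{thm:existence-uniqueness}, $\matrix$ is irreducible and nonnegative, so $\rho:=\rho(\matrix)$ is a positive eigenvalue with a positive eigenvector $\b\beta$. Writing $\matrix\b\beta=\rho\b\beta$ as $(\b\alpha^{T}\b\beta)\mathbf{1}=(\rho I+\mathrm{diag}(\b\alpha))\b\beta$ gives $\beta_{i}=(\b\alpha^{T}\b\beta)/(\rho+\alpha_{i})$; substituting into $\b\alpha^{T}\b\beta=\sum_{i}\alpha_{i}\beta_{i}$ and dividing by $\b\alpha^{T}\b\beta>0$ yields the secular equation
$$
g(\rho) := \sum_{i=1}^{5}\frac{\alpha_{i}}{\rho+\alpha_{i}} = 1 .
$$
The map $g$ is continuous and strictly decreasing on $(0,\infty)$ with $g(0)=5$ and $g(\lambda)\to 0$ as $\lambda\to\infty$, so $\rho$ is its unique positive root; in particular $\rho\le 1$ if and only if $g(1)\le 1$, and $\rho=1$ if and only if $g(1)=1$.

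It then remains to show that the hypothesis forces $g(1)=\sum_{i=1}^{5}\alpha_{i}/(1+\alpha_{i})\le 1$, with equality only if $\alpha_{s}=1/4$ for all $s$. The elementary step is that $\min_{s}\alpha_{s}\le\tfrac15\sum_{s}\alpha_{s}$, so the hypothesis gives $\tfrac45 S\le \sum_{s}\alpha_{s}-\min_{s}\alpha_{s}\le 1$, i.e.\ $S:=\sum_{s}\alpha_{s}\le 5/4$. Next, since $\phi(t):=t/(1+t)$ is strictly concave on $[0,\infty)$, it lies below its tangent line at $t=1/4$, namely $\phi(t)\le\tfrac{1}{25}+\tfrac{16}{25}t$ for all $t\ge 0$, with equality if and only if $t=1/4$. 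Summing over $i$ and using $S\le 5/4$,
$$
g(1) = \sum_{i=1}^{5}\phi(\alpha_{i}) \le \frac15 + \frac{16}{25}S \le \frac15 + \frac{16}{25}\cdot\frac54 = 1 .
$$
If $g(1)=1$ then both of these inequalities are equalities: equality in the tangent-line bound for every $i$ forces $\alpha_{i}=1/4$ for all $i$ by strict concavity, which is consistent with $S=5/4$. Hence either $g(1)<1$, giving $\rho<1$, or $g(1)=1$, giving $\rho=1$ and $\alpha_{s}=1/4$ for all $s$.

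I do not expect a real obstacle here. The two places that need a little care are: (i) confirming that the Perron root of $\matrix$ is precisely the positive solution of $g(\lambda)=1$ — this follows from the eigenvector computation above, or alternatively from the matrix-determinant lemma, which gives $\det(\lambda I-\matrix)=\prod_{i}(\lambda+\alpha_{i})\,(1-g(\lambda))$, so that for $\lambda>0$ the characteristic polynomial vanishes exactly when $g(\lambda)=1$; and (ii) tracking the equality case cleanly through the two inequalities. Note that the bound $\alpha_{s}\le 1$ is not actually used in the argument, but keeping it is harmless since it is already part of Definition~\ref{def:multi-hits}.
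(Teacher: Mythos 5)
Your proof is correct, but it follows a genuinely different route from the paper's. The paper's argument is a two-line application of the Ger\v{s}gorin circle theorem together with Taussky's boundary refinement for irreducible matrices: the off-diagonal row sums of $\matrix$ are $S-\alpha_s$, where $S=\alpha_1+\cdots+\alpha_5$, so every eigenvalue lies in the disk of radius $S-\min_s\alpha_s\le 1$, giving $\rho(\matrix)\le 1$; and if $\rho(\matrix)=1$, that eigenvalue lies on the boundary of the union of the disks, hence (by irreducibility) on the boundary of every disk, forcing $S-\alpha_s=1$ for all $s$ and thus $\alpha_s=1/4$. You instead exploit the structure $\matrix=\mathbf{1}\,\b\alpha^{T}-\mathrm{diag}(\b\alpha)$ to reduce the Perron root to the unique positive solution of the secular equation $\sum_i\alpha_i/(\rho+\alpha_i)=1$, and then close the estimate with a tangent-line bound coming from the strict concavity of $t\mapsto t/(1+t)$; all steps check out, including the equality analysis. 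The paper's route buys brevity at the cost of invoking the less elementary irreducible-boundary version of Ger\v{s}gorin; your route is more computational but yields strictly more: it identifies $\rho(\matrix)$ exactly, shows that the sharp criterion for $\rho(\matrix)\le 1$ is $\sum_i\alpha_i/(1+\alpha_i)\le 1$, and in fact your intermediate step proves the conclusion already under the weaker hypothesis $S\le 5/4$ (which is implied by, but does not imply, $S-\min_s\alpha_s\le 1$). You are also right that the constraint $\alpha_s\le 1$ is not used — the same is true of the paper's proof.
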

\begin{proof}
The result is a direct consequence of the Gershgorin Theorem  for irreducible matrices applied to $\matrix$; see, e.g., \cite{varga2010gervsgorin}. 
The theorem states that the eigenvalues of $\matrix$ lie within the union of the circles $\Delta_s = \{\lambda \in \mathbb C: |\lambda|\leq \sum_{i\neq s}\alpha_i\}$, $s=1, \dots, 5$, and, since $\matrix$ is irreducible, an eigenvalue of $\matrix$ cannot lie on the boundary of a disk $\Delta_s$ unless it lies on the boundary of every disk. 
\end{proof}

\subsection{Relation with tensor singular vectors}
A well known matrix-theoretic characterization of HITS centrality for monolayer networks is in terms of the dominant singular vectors of the adjacency matrix $A$. 
The following theorem shows that an analogous relation holds 
between the components of $\bb c$ and the singular vectors of the adjacency tensor $\A$ defined as in \cite{lim2005singular}.  
\begin{theorem}
Let $\b \alpha = (\alpha_1,\ldots,\alpha_5)>0$ be such that $\rho(M_{\b\alpha})<1$ and let $\bb c \in \mathcal C_{\A}$ be the corresponding MD-HITS centrality. Then there exists a positive real number $\sigma$ such that~${\lambda_1^{1/ \alpha_1}=\dots=\lambda_5^{1/ \alpha_5}=\sigma}$   and $\sigma$ is the maximal $\ell^{\big(\frac{\alpha_1+1}{\alpha_1},\cdots, \frac{\alpha_5+1}{\alpha_5}\big)}$-singular value of $\A$, with corresponding singular vectors $\b c_1, \dots, \b c_5$.
\end{theorem}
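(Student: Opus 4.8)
The plan is to read the $\ell^p$-singular vector equations for $\A$ directly off the fixed-point identity $\F_\A^{\b \alpha}(\bb c)=\b \lambda\otimes \bb c$. Since $\A$ and $\bb c$ are entry-wise nonnegative, each slice $f_s(\bb c)$ is nonnegative, so the $s$-th equation $f_s(\bb c)^{\alpha_s}=\lambda_s\,\b c_s$ may be raised to the power $1/\alpha_s$ entry-wise, giving $f_s(\bb c)=\lambda_s^{1/\alpha_s}\,\b c_s^{\,1/\alpha_s}$ for $s=1,\dots,5$, where the power acts component-wise. Setting $p_s=(\alpha_s+1)/\alpha_s$, so that $1/\alpha_s=p_s-1$, this is exactly the defining system of the $\ell^{(p_1,\dots,p_5)}$-singular vectors of $\A$ in the sense of \cite{lim2005singular}, provided the five positive scalars $\lambda_1^{1/\alpha_1},\dots,\lambda_5^{1/\alpha_5}$ coincide.

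To see that they do, I would first pass to the representative of $\bb c$ (which the eigenvalue equation fixes only up to a positive rescaling of each factor, $\b \lambda$ transforming accordingly) normalized so that $\|\b c_s\|_{p_s}=1$ for all $s$. Contracting the $s$-th relation with $\b c_s$ then gives, on one side, $\sum_{i_s}(\b c_s)_{i_s}f_s(\bb c)_{i_s}=\sum_{i_1,\dots,i_5}\A_{i_1\dots i_5}(\b c_1)_{i_1}\cdots(\b c_5)_{i_5}$, a quantity independent of $s$, and on the other side $\lambda_s^{1/\alpha_s}\sum_{i_s}(\b c_s)_{i_s}^{p_s}=\lambda_s^{1/\alpha_s}$. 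Hence $\lambda_1^{1/\alpha_1}=\cdots=\lambda_5^{1/\alpha_5}=:\sigma=\sum_{i_1,\dots,i_5}\A_{i_1\dots i_5}(\b c_1)_{i_1}\cdots(\b c_5)_{i_5}>0$, and $\b c_1,\dots,\b c_5$ are singular vectors of $\A$ associated with the singular value $\sigma$.

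It remains to prove that $\sigma$ is the \emph{maximal} $\ell^{(p_1,\dots,p_5)}$-singular value of $\A$, and here I would transfer the maximality of Theorem~\ref{thm:existence-uniqueness}. Running the computation of the first paragraph in reverse shows that any real $\ell^{(p_1,\dots,p_5)}$-singular tuple of $\A$ with singular value $\widetilde\sigma\ge 0$ is an eigenvector of $\F_\A^{\b \alpha}$ with eigenvalue $(\widetilde\sigma^{\alpha_1},\dots,\widetilde\sigma^{\alpha_5})$, using $\alpha_s(p_s-1)=1$ and the sign-preserving powers; in particular the normalized $\bb c$ has eigenvalue $(\sigma^{\alpha_1},\dots,\sigma^{\alpha_5})$. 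Let $\b \beta=(\beta_1,\dots,\beta_5)>0$ be the Perron vector of $M_{\b \alpha}$ from that theorem. The maximality inequality $\lambda_1^{\beta_1}\cdots\lambda_5^{\beta_5}\ge|\mu_1^{\beta_1}\cdots\mu_5^{\beta_5}|$ becomes $\sigma^{\,\alpha_1\beta_1+\cdots+\alpha_5\beta_5}\ge\widetilde\sigma^{\,\alpha_1\beta_1+\cdots+\alpha_5\beta_5}$, and since $\alpha_1\beta_1+\cdots+\alpha_5\beta_5>0$ this yields $\widetilde\sigma\le\sigma$. (Alternatively, one can use the variational formula $\sigma_{\max}=\max\{\sum_{i_1,\dots,i_5}\A_{i_1\dots i_5}(\b x_1)_{i_1}\cdots(\b x_5)_{i_5}:\|\b x_s\|_{p_s}=1\}$, note that for $\A\ge 0$ the maximum is attained on a nonnegative tuple, and identify it with $\bb c$ through the uniqueness in Theorem~\ref{thm:existence-uniqueness}.)

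The step I expect to be the main obstacle is this last one: making the dictionary between $\ell^{(p_1,\dots,p_5)}$-singular tuples of $\A$ and eigenpairs of $\F_\A^{\b \alpha}$ precise for signed --- not merely nonnegative --- tuples, so that every real singular value appears as an eigenvalue of $\F_\A^{\b \alpha}$ of the form $(\widetilde\sigma^{\alpha_1},\dots,\widetilde\sigma^{\alpha_5})$ and Theorem~\ref{thm:existence-uniqueness} applies to it; for the variational route one must additionally check that a nonnegative maximizing singular tuple has the admissible zero pattern defining $\mathcal C_{\A}$, so that it coincides with $\bb c$. The bookkeeping between the $\|\cdot\|_\infty=1$ normalization of Definition~\ref{def:multi-hits} and the $\ell^{p_s}$-normalization under which $\lambda_s^{1/\alpha_s}=\sigma$ is harmless by homogeneity but should be made explicit.
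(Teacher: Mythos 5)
Your argument is correct in substance, but it is worth noting that the paper does not actually prove this theorem: its ``proof'' is a one-line citation of Theorem~\ref{thm:existence-uniqueness} together with Lemma~5.1 of \cite{gautier2017tensor}, and what you have written is essentially a self-contained reconstruction of that lemma. Your first two steps (reading off $f_s(\bb c)=\lambda_s^{1/\alpha_s}\b c_s^{p_s-1}$ with $p_s=(\alpha_s+1)/\alpha_s$, then contracting against $\b c_s$ under the $\ell^{p_s}$-normalization to identify all five scalars with $\A(\b c_1,\dots,\b c_5)$) are exactly the right mechanism, and your remark about the normalization is not merely bookkeeping: under the $\|\cdot\|_\infty$-normalization of Definition~\ref{def:multi-hits} one has $\lambda_s^{1/\alpha_s}=\|f_s(\bb c)\|_\infty$, and these need not coincide, so the identity $\lambda_1^{1/\alpha_1}=\cdots=\lambda_5^{1/\alpha_5}$ really does refer to the $\ell^{p_s}$-normalized representative --- a point the paper leaves implicit. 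For the maximality step, I would advise you to commit to your parenthetical variational route and drop the ``signed dictionary'': $\F_\A^{\b\alpha}$ involves fractional entrywise powers and is simply not defined on sign-changing tuples, so trying to realize every real singular tuple as an eigenpair of $\F_\A^{\b\alpha}$ is a dead end. Instead, since $\A\geq 0$ gives $|\A(\b x_1,\dots,\b x_5)|\leq\A(|\b x_1|,\dots,|\b x_5|)$, the maximal singular value is attained at a nonnegative tuple, whose Lagrange conditions make it a nonnegative eigenvector of $\F_\A^{\b\alpha}$ with eigenvalue $(\widetilde\sigma^{\alpha_1},\dots,\widetilde\sigma^{\alpha_5})$; applying the maximality clause of Theorem~\ref{thm:existence-uniqueness} to this eigenvalue yields $\sigma^{\sum_s\alpha_s\beta_s}\geq\widetilde\sigma^{\sum_s\alpha_s\beta_s}$ and hence $\widetilde\sigma\leq\sigma$, without ever needing the maximizer to lie in $\mathcal C_{\A}$ or invoking uniqueness. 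What your approach buys over the paper's is a proof readable without the external reference; what it costs is precisely the care about signs and normalizations that you have already flagged.
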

\begin{proof}
It follows by combining  Theorem~\ref{thm:existence-uniqueness} with \cite[Lemma~5.1]{gautier2017tensor}.
\end{proof}

\section{The algorithm}\label{sec:algorithm}     %
We present in Alg.\ref{alg:1} an efficient and parallelizable iterative method for the computation of the MD-HITS centrality tuple $\bb c$ defined in Definition \ref{def:multi-hits}. 
In the remainder of the paper,  we write $\|\bb x\|_{\b\beta}$, for any given $\bb x\in\RRR$, to denote the norm  
\begin{equation}\label{eq:norm}
\|\bb x\|_{\b\beta} = \sum_{s=1}^5 \beta_s \|\b x_s\|_\infty = \sum_{s=1}^5\beta_s \max_{i_s}|(\b x_s)_{i_s}|,
\end{equation}
where $\b\beta = (\beta_1,\ldots,\beta_5)>0$.

Note that each of the steps 2--6 in  Alg.~\ref{alg:1}, as well as each individual normalization at step 7, can be performed in parallel at each iteration, significantly enhancing the performance of the algorithm. 

The following theorem shows global convergence of the algorithm and provides an estimate of the number of iterations required to achieve convergence. %
\begin{theorem}\label{thm:convergence}
Let $\b \beta = (\beta_1,\ldots,\beta_5)>0$ be such that $\matrix \b \beta = \rho(M_{\b\alpha})\matrix$, with $\|\b\beta\|_1 = 1$. 
For $\bb c^{(0)}>0$, let $\bb c^{(k)} = (\b c_1^{(k)}, \dots, \b c_5^{(k)}) \in \RRR$ be defined as in Alg.~1. Then $\bb c^{(k)}\in \mathcal C_\A$ and $\lim_{k\to\infty}\bb c^{(k)}=\bb c$,  the MD-HITS centrality tuple. 
Moreover, for $k=0,1,2,\dots$, it holds
$$ 
\frac{\|\bb c^{(k+1)} - \bb c^{(k)}\|_{\b\beta}}{\|\bb c^{(k+1)}\|_{\b\beta}} \leq 2\,  \rho(M_{\b\alpha})^k\,\| \log (\bb c^{(1)}/\bb c^{(0)})\|_{\b \beta}
$$
where both the logarithm and the division in the right hand side are intended entry-wise, with the convention that $\log(0)=0$. 
\end{theorem}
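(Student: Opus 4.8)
The plan is to recognize Algorithm~\ref{alg:1} as the fixed-point iteration $\bb c^{(k+1)}=\Phi(\bb c^{(k)})$, where $\Phi$ sends a positive tuple $\bb x$ to the tuple whose $s$-th slice is $(f_s(\bb x)/\|f_s(\bb x)\|_\infty)^{\alpha_s}$, and then to run a Banach-contraction argument for $\Phi$ on the complete metric space $(\mathcal C_\A, d_H)$ from the proof of Theorem~\ref{thm:existence-uniqueness}. First I would check that $\Phi(\mathcal C_\A)\subseteq\mathcal C_\A$ and, more generally, that $\Phi$ maps positive tuples into $\mathcal C_\A$. The elementary fact behind this is that any nonzero entry $\A_{i_1\dots i_5}>0$ forces $\sum \A_{\dots i_t\dots}>0$ in each slot, so all five indices of a nonzero entry avoid the forced-zero sets $O_1,\dots,O_5$ of $\mathcal C_\A$; hence for $\bb x>0$ (or $\bb x\in\mathcal C_\A$) one has $f_s(\bb x)_{i_s}>0$ exactly when $i_s\notin O_s$, and both the entrywise power and the $\|\cdot\|_\infty$-normalization preserve this zero pattern and yield unit sup-norm slices. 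This gives $\bb c^{(k)}\in\mathcal C_\A$ for all $k\ge 1$, and $\Phi(\bb c)=\bb c$ is exactly a rescaling of the eigenvector identity~\eqref{eq:multi-hits} (divide the $s$-th slice by $\lambda_s=\|f_s(\bb c)\|_\infty^{\alpha_s}$), so the unique fixed point of $\Phi$ is the MD-HITS tuple $\bb c$.

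Next I would prove that $\Phi$ is a strict $d_H$-contraction of constant $\rho(M_{\b\alpha})$. Writing $\Phi=P_{\b\alpha}\circ G$ with $G$ as in~\eqref{eq:G} and $P_{\b\alpha}$ the entrywise-power map $\bb y\mapsto(\b y_1^{\alpha_1},\dots,\b y_5^{\alpha_5})$, the computation in the proof of Theorem~\ref{thm:existence-uniqueness} already gives $d_H(G(\bb x),G(\bb y))\le\rho(M_{\b\alpha})\,d_H(\bb x,\bb y)$ (this is~\eqref{eq:CW}), while $P_{\b\alpha}$ is nonexpansive for $d_H$ because raising a slice to a power $\alpha_s\in(0,1]$ multiplies the corresponding oscillation term by $\alpha_s\le1$; composing yields the contraction. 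Since $\rho(M_{\b\alpha})<1$, the Banach fixed-point theorem on $(\mathcal C_\A,d_H)$ gives $\bb c^{(k)}\to\bb c$ in $d_H$, and because the iterates share a common zero pattern and have unit sup-norm slices, $d_H$-convergence upgrades to entrywise convergence, hence to convergence in $\|\cdot\|_{\b\beta}$.

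For the rate I would iterate the contraction to obtain $d_H(\bb c^{(k+1)},\bb c^{(k)})\le\rho(M_{\b\alpha})^k\,d_H(\bb c^{(1)},\bb c^{(0)})$ for $k=0,1,2,\dots$; here $\bb c^{(0)}$ need only be positive, and the estimate still applies to the first step because $d_H$ is a projective pseudometric that ignores the forced-zero coordinates, on which $\Phi$ acts exactly as on the associated normalized element of $\mathcal C_\A$. It then remains to pass between $d_H$ and $\|\cdot\|_{\b\beta}$ via two elementary inequalities valid for tuples with unit sup-norm slices. From $|a-b|\le\max(a,b)\,|\log(a/b)|$ together with $\max((\b c_s^{(k)})_{i},(\b c_s^{(k+1)})_{i})\le1$ one gets $\|\bb c^{(k+1)}-\bb c^{(k)}\|_{\b\beta}\le\|\log(\bb c^{(k+1)}/\bb c^{(k)})\|_{\b\beta}$; since each normalized slice has an entry equal to $1$, the vector $\log(\b c_s^{(k+1)}/\b c_s^{(k)})$ has entries of both signs, so its sup-norm is at most its oscillation $\max_i\log(\cdot)_i-\min_i\log(\cdot)_i$, giving $\|\log(\bb c^{(k+1)}/\bb c^{(k)})\|_{\b\beta}\le d_H(\bb c^{(k+1)},\bb c^{(k)})$; and the same oscillation-versus-sup-norm bound used the other way yields $d_H(\bb c^{(1)},\bb c^{(0)})\le2\,\|\log(\bb c^{(1)}/\bb c^{(0)})\|_{\b\beta}$. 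Chaining the three estimates and using $\|\bb c^{(k+1)}\|_{\b\beta}=\sum_s\beta_s=1$ produces exactly the stated inequality.

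I expect the main obstacle to be this final translation between the Hilbert-type metric $d_H$ and the weighted sup-norm $\|\cdot\|_{\b\beta}$: getting the clean constant $2$ rather than a crude $e^{(\cdot)}-1$ factor relies on using the normalization $\|\b c_s^{(k)}\|_\infty=1$ both to replace oscillations by sup-norms and to kill any exponential term in $|a-b|\le\max(a,b)|\log(a/b)|$, and on handling the forced-zero coordinates consistently with the $\log(0)=0$ convention. A secondary, more routine, point is the invariance $\Phi(\mathcal C_\A)\subseteq\mathcal C_\A$ together with the extension of the contraction estimate to a merely-positive $\bb c^{(0)}$; both reduce to the observation that nonzero tensor entries never meet the sets $O_s$.
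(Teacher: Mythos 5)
Your proposal is correct and essentially reproduces the paper's proof: iterate the contraction estimate \eqref{eq:CW} on the complete metric space $(\mathcal C_{\A}, d_H)$ to obtain $d_H(\bb c^{(k+1)},\bb c^{(k)})\leq\rho(M_{\b\alpha})^k\, d_H(\bb c^{(1)},\bb c^{(0)})$, and then pass to $\|\cdot\|_{\b\beta}$ via the two bridging inequalities $\sum_s\beta_s\|\b x_s-\b y_s\|_\infty\leq d_H(\bb x,\bb y)$ (from $|a-b|\leq\max\{a,b\}\,|\log a-\log b|$ together with the unit sup-norm normalization of the iterates) and $d_H(\bb x,\bb y)\leq 2\|\log(\bb x/\bb y)\|_{\b\beta}$, exactly as the paper does. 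The only caveat concerns your bookkeeping of the split $\Phi=P_{\b\alpha}\circ G$: the factor $\rho(M_{\b\alpha})<1$ is generated by the exponents $\b\alpha$ rather than by the unpowered normalized map $G$ (which by itself is not a $d_H$-contraction), but since you ultimately invoke the paper's estimate \eqref{eq:CW} for the composite iteration map, the argument goes through unchanged.
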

\begin{proof}
Let $d_H$ and $G$ be defined as in  \eqref{eq:hilbert-metric} and \eqref{eq:G}, respectively. Then step 7 in Algorithm 1 rewrites as $\bb c^{(k+1)} = G(\bb c^{(k)})$. From this it follows that $\bb c^{(k+1)}\in \mathcal C_{\A}$ for any $k$ and also that, using \eqref{eq:CW}, 
\begin{gather}
    d_H(G(\bb c^{(k)}),\bb c^{(k)}) = d_H(G(\bb c^{(k)}), G(\bb c^{(k-1)}))\notag\\ \leq \rho(M_{\b \alpha}) d_H(\bb c^{(k)}, \bb c^{(k-1)})\leq \rho(M_{\b \alpha})^k d_H(\bb c^{(1)}, \bb c^{(0)}).\label{eq:aaa}
\end{gather}
This implies that $\bb c^{(k)}$ converges to the fixed point $\bb c$ of $G$, which is then the positive eigenvector $F_\A^{\b \alpha}(\bb c) = \b \lambda \otimes \bb c$. 
Now, given a vector $\b x$ let us denote by $\tilde{\b x}$ the vector with entries $(\tilde{\b x})_i = \log((\b x)_i)$ if $(\b x)_i>0$ and $(\tilde{\b x})_i=0$ otherwise. Note that  $|a-b|\leq \max\{a,b\}|\log a - \log b|$ for any two scalars $a,b>0$. Thus, for any $s=1,\dots,5$ and  any $\bb x,\bb y\in \mathcal C_{\A}$  we have 
\begin{align*}
\|\b x_s - \b y_s\|_\infty \leq \frac{\|\b x_s-\b y_s\|_{\infty} }{\displaystyle{\max_{i_s\notin O_s} \max\{(\b x_s)_{_{i_s}}, (\b y_s)_{_{i_s}}\} }}  \leq \|\tilde{\b x}_s -\tilde{\b y}_s\|_{\infty},
\end{align*}
since the entries of any vector in $\mathcal C_{\A}$ are at most $1$. Therefore,
\begin{align*}
    \|\b x_s - \b y_s\|_\infty &\leq \|\tilde{\b x}_s -\tilde{\b y}_s\|_{\infty} 
     = \log\left( \max_{i_s\notin O_s} e^{|(\tilde{\b x}_s)_{_{i_s}} - (\tilde{\b y}_s)_{_{i_s}} |}   \right) \\
    &= \log \left(\max\{\max_{i_s\notin O_s}\frac{(\b x_s)_{_{i_s}}}{(\b y_s)_{_{i_s}}}, \max_{i_s\notin O_s}\frac{(\b y_s)_{_{i_s}}}{(\b x_s)_{_{i_s}}} \} \right) \\
    &\leq \log \left( \max_{i_s\notin O_s}\frac{(\b x_s)_{_{i_s}}}{(\b y_s)_{_{i_s}}} \max_{i_s\notin O_s}\frac{(\b y_s)_{_{i_s}}}{(\b x_s)_{_{i_s}}}   \right), 
\end{align*}
and from \eqref{eq:hilbert-metric} it follows $\sum_s \beta_s \|\b x_s - \b y_s\|_\infty \leq d_H(\bb x,\bb y)$. 
This, together with \eqref{eq:aaa} and the fact that  $\|\b c^{(k+1)}_{s}\|_\infty =1$ for all $s=1, \dots,5$, shows that 
$$
\frac{\|\bb c^{(k+1)} - \bb c^{(k)}\|_{\b\beta}}{\|\b c^{(k+1)}\|_{\b\beta}} \leq   \rho(M_{\b \alpha})^k d_H(\bb c^{(1)}, \bb c^{(0)})\, .
$$
To conclude, we now show that 
\begin{equation}\label{eq:xyz}
    d_H(\bb x, \bb y) \leq 2 \|\log(\bb x/\bb y)\|_{\b \beta}\, .
\end{equation}
For any $s=1,\dots,5$ we have 
\begin{align*}
    &d_H(\b x_s,\b y_s) =\log \left( \max_{i_s\notin O_s}\frac{(\b x_s)_{_{i_s}}}{(\b y_s)_{_{i_s}}} \max_{i_s\notin O_s}\frac{(\b y_s)_{_{i_s}}}{(\b x_s)_{_{i_s}}}   \right) \\
    &= \max_{i_s\notin O_s}(\log x_{i_s}-\log y_{i_s} )+\max_{i_s\notin O_s}(\log y_{i_s}-\log x_{i_s}) \\
    & \leq 2 \max\{\max_{i_s\notin O_s}(\log x_{i_s}-\log y_{i_s} ), \max_{i_s\notin O_s}(\log y_{i_s}-\log x_{i_s})\}\\
    &= 2\|\log(\b x_s/\b y_s)\|_\infty
\end{align*}
which proves the desired bound \eqref{eq:xyz}.
\end{proof}

\SetKwBlock{Repeat}{For $k=0,1,2,3,\dots$ repeat}{{until $\|\bb c^{(k)}-\bb c^{(k+1)}\|_{\b\beta}/\|\bb c^{(k+1)}\|_{\b\beta}< \varepsilon$}}
\begin{algorithm2e}[t]
		\caption{MD-HITS algorithm}\label{alg:1}
		 \DontPrintSemicolon
		\KwIn{$\A$; $\b\alpha$, $\b\beta>0$ such that $\matrix\b \beta = \rho(M_{\b\alpha})\b \beta$ with  $\rho(M_{\b\alpha})<1$ and $\sum_i \beta_i =1$; tolerance $\varepsilon>0$; $\F^{\b\alpha}_{\A}=(f_1^{\alpha_1},\dots,f_5^{\alpha_5})$; initial guess $\bb c^{(0)} >0$. }
		\Repeat{
		$\b h      = f_1(\, \bb c^{(k)})^{\alpha_1}$ \;
		$\b a      = f_2(\, \bb c^{(k)})^{\alpha_2}$ \;
		$\b b      = f_3(\, \bb c^{(k)})^{\alpha_3}$ \;
		$\b r      = f_4(\, \bb c^{(k)})^{\alpha_4}$ \;
		$\b \tau   = f_5(\, \bb c^{(k)})^{\alpha_5}$ \;
		$\bb c^{(k+1)} = \left(\frac{\b h}{\|\b h\|_\infty}, \frac{\b a}{
		\|\b a\|_\infty},\frac{\b b}{\|\b b\|_\infty},\frac{\b r}{\|\b r\|_\infty},\frac{\b \tau}{\|\b \tau\|_\infty}\right)$\;
		}
		\KwOut{Approximation $\bb c^{(k+1)}$ to $\bb c$.}
\end{algorithm2e}

Thm.\ \ref{thm:convergence} shows that the number of 
 iterations $k_*$ required by Alg.\ \ref{alg:1} to achieve convergence decays as the inverse of the logarithm of $\rho(M_{\b \alpha})$, i.e., $k_* \approx s_1/\ln\rho(M_{\b \alpha})+s_2$, for some scalars $s_1,s_2$. This behaviour is confirmed by Fig.\ \ref{fig:iter-time-alpha}. 
This has computational relevance: In the case of sparse networks, each iteration of Alg.\ \ref{alg:1} requires $O(n_Vn_Ln_T)$ flops. However, Thm.\ \ref{thm:convergence} shows that tuning $\b \alpha$  allows us to reduce the overall timing and number of iterations required to compute the MD-HITS centrality vector. The dependence of the centrality obtained with respect to the variation of $\b \alpha$ is analyzed in \S \ref{sec:citation_network}.

\section{Experiments}\label{sec:experiments}
In this section we describe a number of numerical experiments on small networks and real-world data to provide insights in the performance of our model and algorithm. 
All the experiments were performed using MATLAB Version 9.1.0.441655 (R2016b)
on an HP EliteDesk running Scientific Linux 7.3 (Nitrogen), a 3.2 GHz Intel Core i7 processor, and 4 GB of RAM. We used a serial implementation of Alg.~\ref{alg:1}. 
Both the code and the data used in this paper can be found at: {\small\tt https://github.com/ftudisco/multi-dimensional-hits}.

\subsection{The curse of disconnectedness}
We show here that MD-HITS is able to correctly identify hubs and authorities  when applied to monolayer networks that do not satisfy the hypothesis required by HITS.

The eigenvector-based centrality measures for mono and multilayer networks considered in \S\ref{sec:related-work} all suffer ``the curse of disconnectedness'': uniqueness of the centrality scores is not guaranteed unless the network under study is strongly connected. 
Only local convergence of standard algorithms can be ensured for these models and different runs of the same method can lead to different results  \cite{farahat2006authority}.
As an example, consider the network in Fig.\ref{fig:example_curse_connectivity}.
%
\usetikzlibrary{positioning}
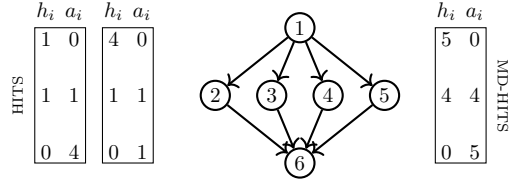
\begin{figure}[!h]
\centering
\vspace{-10pt}\hspace{-10pt}
\begin{tikzpicture}[scale=.75]
\tikzset{every node/.style={scale=.8,thick,inner sep=2}}
    \node[circle,draw=black] (1) at (4,1.2) {1};
    \node[circle,draw=black] (2) at (2.5,0) {2};
    \node[circle,draw=black] (3) at (3.5,0) {3};
    \node[circle,draw=black] (4) at (4.5,0) {4};
    \node[circle,draw=black] (5) at (5.5,0) {5};
    \node[circle,draw=black] (6) at (4,-1.2) {6};
    \path[->,thick] (1) edge (2) (1) edge (3) (1) edge (4) (1) edge (5) (2) edge(6) (3) edge (6) (4) edge (6) (5) edge (6);

    \node (hh) at (-.5,1.5) {$h_i$};
    \node (h1) at (-.5,1) {1};
    \node (h2) at (-.5, 0) {1};
    \node (h3) at (-.5,-1) {0};
    \node (aa) at (0,1.45) {$a_i$};
    \node (a1) at (0,1) {0};
    \node (a2) at (0, 0) {1};
    \node (a3) at (0,-1) {4};
    \draw (-.7,-1.2) rectangle (.2, 1.2);
    \node [rotate = 90] at (-1,0) {{\scriptsize HITS}};

    \node (hh) at (.7,1.5) {$h_i$};
    \node (h1) at (.7,1) {4};
    \node (h2) at (.7, 0) {1};
    \node (h3) at (.7,-1) {0};
    \node (aa) at (1.2,1.45) {$a_i$};
    \node (a1) at (1.2,1) {0};
    \node (a2) at (1.2, 0) {1};
    \node (a3) at (1.2,-1) {1};
    \draw (.5,-1.2) rectangle (1.4, 1.2);

    \node (hh) at (6.6,1.5) {$h_i$};
    \node (h1) at (6.6,1) {5};
    \node (h2) at (6.6, 0) {4};
    \node (h3) at (6.6,-1) {0};
    \node (aa) at (7.1,1.45) {$a_i$};
    \node (a1) at (7.1,1) {0};
    \node (a2) at (7.1, 0) {4};
    \node (a3) at (7.1,-1) {5};
    \draw (6.4,-1.2) rectangle (7.3, 1.2);
    \node [rotate = -90] at (7.6,0) {{\scriptsize MD-HITS}};
\end{tikzpicture}
\vspace{-12pt}\hspace{-10pt}
\caption{Example of the course of disconnectedness}
\label{fig:example_curse_connectivity}
\end{figure}
Depending on the initialization, the HITS algorithm for this graph returns two different solutions, displayed in the tables on the left-hand side of Fig.~\ref{fig:example_curse_connectivity} (nodes $2-5$ are isomorphic and thus assigned the same scores). 
In the first case the hub vector fails to detect that node $1$ is a better hub than nodes $2-5$. Similarly, in the second case, the authority vector fails to identify node $6$ as a the best authority. 
In practice, to overcome this potential ambiguity,  HITS requires a preprocessing phase where  the connectivity pattern of the data is investigated and, possibly, irreducibility is enforced by subsampling the data or perturbing the graph by adding artificial edges. 
This process can be extremely computationally demanding, especially in the setting of temporal multilayer networks.

MD-HITS, on the other hand, is always uniquely defined, even for monolayer graphs. In particular, when tailored to the monolayer setting, our model reduces to a ``nonlinear'' version of the classical HITS algorithm, which however does not suffer the curse of disconnectedness.  
Precisely, let $A$ be the adjacency matrix of a monolayer network that we understand as a temporal multilayer with $n_L=n_T=1$. Then, the MD-HITS centrality $\F_A^{\b \alpha}(\bb c) = \b \lambda \otimes \bb c$ is the unique positive solution in $\mathcal C_A$ of the system of equations
\begin{equation}\label{eq:nonlinearHITS}
    (A\b a)^{\alpha_1} = \lambda_1 \b h, \qquad (A^T \b h)^{\alpha_2} = \lambda_2 \b a \, .
\end{equation}
When $\alpha_1=\alpha_2=1$, the standard hub and authority score of the HITS model are retrieved and their uniqueness is not ensured. Instead, with the same proof of Theorem \ref{thm:existence-uniqueness}, if $M_{\b \alpha}=\begin{bsmallmatrix}0 & \alpha_2 \\ \alpha_1 & 0\end{bsmallmatrix}$ with $\rho(M_{\b\alpha})<1$, then \eqref{eq:nonlinearHITS} has a unique solution which we compute with Alg.\ref{alg:1}. In particular, if we apply Alg.\ref{alg:1} to the graph in Fig.\ref{fig:example_curse_connectivity} with $\alpha_1=\alpha_2=1/3$ we obtain the centrality scores displayed in the rightmost table of Fig.~\ref{fig:example_curse_connectivity} 
regardless of the starting point. 
These two vectors capture the actual roles of nodes in this graph. 

\subsection{HITS vs MD-HITS}
As we have seen in the 
example discussed in the introduction, taking into account multiple data features allows us to build a multilayer network out of a given dataset, rather than just a standard monolayer graph. We now show that when connectivity is ensured, MD-HITS returns the same rankings as HITS in the monolayer setting. Moreover, we show that allowing the modelling network to account for more 
facets of the data we are able to better detect the roles of nodes. To this end we consider the small example dataset presented in Fig.~\ref{fig:hitsvsmulti}. The graph on the left shows a multilayer network of interactions between four nodes. On the right, we display the standard monolayer network between the same four nodes that corresponds to the aggregate network associated to the multilayer. Here, the edge set is obtained by ignoring the layer aspect in the edges of the multilayer.  
It is easy to see that the hub and authority centrality retrieved by standard HITS on the monolayer network will assign the same scores to nodes $1$ and $2$ and to nodes $3$ and $4$, with the first pair ranked higher than the second. This result is confirmed by the first table of  Fig.~\ref{fig:hitsvsmulti}. 
The same ranking is obtained if MD-HITS (in the formulation of~\eqref{eq:nonlinearHITS}) is applied to the aggregate network,
regardless of the choice of exponents; the first table of Fig.~\ref{fig:hitsvsmulti} reports the scores obtained for $\alpha_1=\alpha_2=1/3$.  

\begin{figure}[t]
\begin{center} 
\begin{tikzpicture}[scale=.9]
    \draw[gray,thick,rounded corners=10pt] (-.3,-.3) rectangle (1.3,1.3);
    \node at (.6,-.45) {{\scriptsize Layer 1}};
    \draw[gray,thick,rounded corners=10pt] (1.5,-.3) rectangle (3.1,1.3);
    \node at (2.4,-.45) {{\scriptsize Layer 2}};
    \draw[gray,thick,rounded corners=10pt] (3.3,-.3) rectangle (4.9,1.3);
    \node at (4.1,-.45) {{\scriptsize Layer 3}};
    \draw[gray,thick,rounded corners=10pt] (-.6,-.6) rectangle (5.2,1.5);
    \node at (1.1,1.7) {{\scriptsize \color{red}{Multilayer network}}};
    \node[scale=.6,circle,draw=black,thick] (21) at (0,0) {2};
    \node[scale=.6,circle,draw=black,thick] (11) at (0,1) {1};
    \node[scale=.6,circle,draw=black,thick] (41) at (1,0) {4};
    \node[scale=.6,circle,draw=black,thick] (31) at (1,1) {3};
    \node[scale=.6,circle,draw=black,thick] (22) at (1.8,0) {2};
    \node[scale=.6,circle,draw=black,thick] (12) at (1.8,1) {1};
    \node[scale=.6,circle,draw=black,thick] (42) at (2.8,0) {4};
    \node[scale=.6,circle,draw=black,thick] (32) at (2.8,1) {3};
    \node[scale=.6,circle,draw=black,thick] (23) at (3.6,0) {2};
    \node[scale=.6,circle,draw=black,thick] (13) at (3.6,1) {1};
    \node[scale=.6,circle,draw=black,thick] (43) at (4.6,0) {4};
    \node[scale=.6,circle,draw=black,thick] (33) at (4.6,1) {3};
    \path[->, thick] (21)edge[] node[]{} (11) (41)edge[] node[]{} (11) (12)edge[] node[]{}(21)
    (12)edge[] node[]{} (41) (12)edge[] node[]{} (32) (22)edge[] node[]{} (42) (32)edge[] node[]{} (13)
    (32)edge[] node[]{} (23) (23)edge[] node[]{} (33) (43)edge[] node[]{} (23); 
    \draw[gray,thick,rounded corners=10pt] (5.7,-.3) rectangle (7.3,1.3);
    \node at (6.5,1.7) {{\scriptsize \color{red}{Aggregate}}};
        \node[scale=.6,circle,draw=black,thick] (2) at (6,0) {2};
    \node[scale=.6,circle,draw=black,thick] (1) at (6,1) {1};
    \node[scale=.6,circle,draw=black,thick] (3) at (7,0) {4};
    \node[scale=.6,circle,draw=black,thick] (4) at (7,1) {3};
    \path[<->, thick] (1)edge[] node[]{} (2) (2)edge[] node[]{} (3) (3)edge[] node[]{}(1)
    (1)edge[] node[]{} (4) (4)edge[] node[]{} (2); 
\end{tikzpicture}

\begin{tabular}{|r|llll|}
\hline
     HITS   &  1 & 1 & 0.86 & 0.86 \\
     MD-HITS & 1 & 1 & 0.78 & 0.78  \\
     \hline
\end{tabular}

\vspace{.5em}

\begin{tabular}{|l|llll|l|lll|}
\hline
    $\b h$ & 1 & 0.97 & 0.94 & 0.90 & $\b b$ & 0.81 & 1 & 0.80  \\
    $\b a$ & 0.98 & 1 & 0.91 & 0.93 & $\b r$ & 1 & 0.88 & 0.98 \\
    \hline
\end{tabular}
\end{center}
\vspace{-1em}
\caption{Top: Example multilayer network and its aggregate version. Center: HITS and MD-HITS node scores for the aggregate network. Bottom: MD-HITS node and layer scores.}
\vspace{-.5em}
\label{fig:hitsvsmulti}
\end{figure}
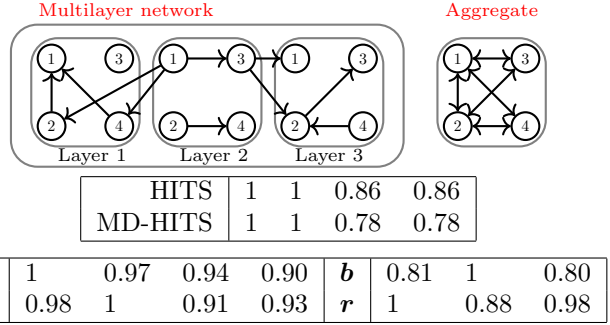

On the other hand, if we consider the multilayer network  of interactions and compute the five ranking vectors of MD-HITS for it, we obtain the results displayed in the bottom table in Fig.~\ref{fig:hitsvsmulti}. These reveal that, for example, node $1$ is a better hub than node $2$, which is expected since it is the node with the largest number of outgoing links originating from layer $2$, which is the most influential as a broadcaster. 

\subsection{Synthetic random data}
We investigate here the scalability of Algorithm 1, with respect to the size of the data. To this end, we use the tensor toolbox (v.\ 2.6) from \cite{TTB_Software} to build sparse random networks of increasing size, with $n_V = n_L = 25, 100, 200, \dots, 500, 1000, 2000, \dots, 5000$, $n_T = n_V^{1/3}$ and $n_Vn_L$ nonzeros (which correspond to a density of $n_V^{-3}$). For each of these networks, we computed the multi-dimensional HITS centrality tuple $\bb c$ and reported execution time (in seconds) and number of iterations. 
In the stopping criterion of Algorithm 1, we used the norm $\|\cdot\|_{\b\beta}$ defined for any given $\bb x\in\RRR$ as 
\begin{equation*}
\|\bb x\|_{\b\beta} = \sum_{s=1}^5 \beta_s \|\b x_s\|_\infty = \sum_{s=1}^5\beta_s \max_{i_s}|(\b x_s)_{i_s}|,
\end{equation*} where $\b \beta>0$ such that $\sum_s \beta_s =1$ is the eigenvector of $\matrix$ associated to $\rho(M_{\b \alpha})$: $\matrix \b \beta = \rho(M_{\b \alpha})\b \beta$ for a  uniform choice of $\b\alpha = \overline{\b\alpha} =  (1,1,1,1,1)/5$.
We set the tolerance to $\varepsilon= 10^{-6}$ and select as starting vector the vector of all ones: $\bb c^{(0)} = {\b 1}$. 
We iterated this process $100$ times and averaged the results. 
Fig.~\ref{fig:performances} reports the average timings required for the computation with the errorbar representing the standard deviation from the mean, versus $n_V$. 
The numbers represent the average number of iterations required to achieve convergence, rounded to the nearest integer. 
This figure clearly shows the outstanding performance of our {\it serial} implementation of the algorithm, showcasing its applicability to much larger datasets. 

\begin{figure}[!t]
    \centering
    \includegraphics[width=.95\columnwidth,clip,trim=0 1em 0 0]{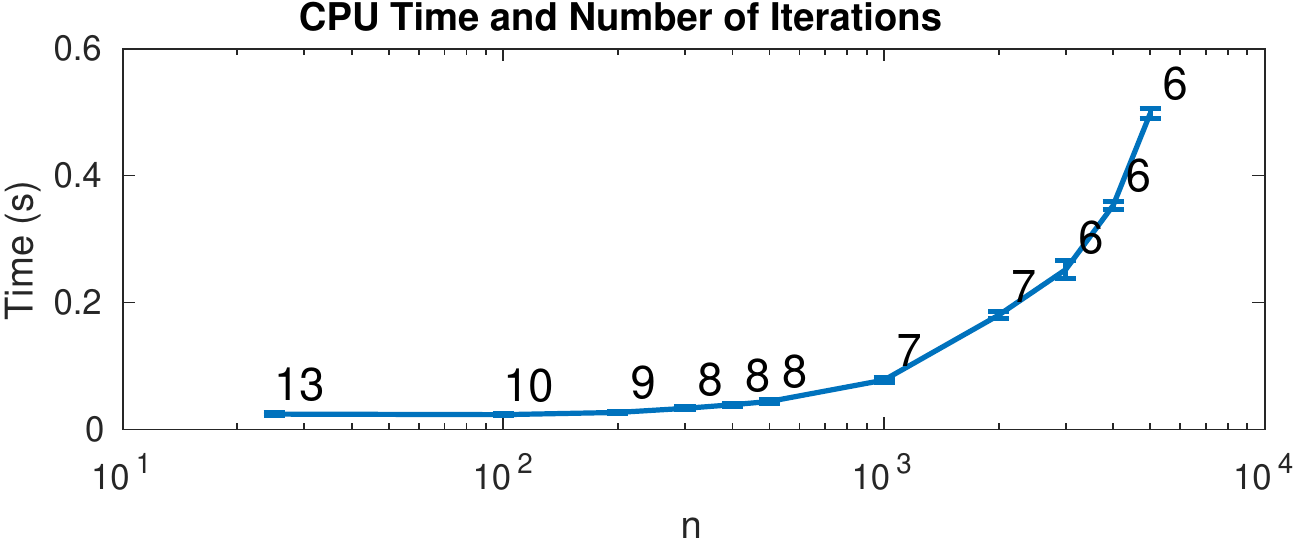}
    \caption{Average time (with standard deviation) required by Algorithm 1 to achieve convergence ($\varepsilon= 10^{-6}$) for a set of sparse random tensors of increasing size (horizontal axis). Numbers  show the average number of iterations.}
    \label{fig:performances}
\end{figure}

\subsection{Temporal multilayer citation network}\label{sec:citation_network}
In this section we perform experiments on a large real-world dataset of scientific publications. 
We build a multilayer temporal citation network from the scientific publications dataset available at~\cite{Tang:08KDD,citnetdata} (release: 2010-05-15) as follows: an edge goes from node $i$ on layer $\ell$ to node $j$ on  layer  $k$ at a given time  $t$ if, in year $t$, node $i$ authored a paper in journal $\ell$ and in that paper $i$ cites a paper authored by $j$~in~journal~$k$. 
The resulting adjacency tensor has $n_V = 592,373$ nodes (i.e., authors), $n_L = 12,608$ layers (i.e., journals), $n_T = 65$ time stamps (i.e., years) and $3,587,948$ nonzeros (i.e., citations). 
For this dataset we analyze numerically the stability of the ranking model with respect to changes in the choice of the exponents $\b \alpha$. 

Our experiments will show that  when no empirical knowledge is available to suggest otherwise, a reasonable choice for $\b\alpha$ is given by a uniform vector $\alpha \b 1$, $\b 1 = (1,\dots,1)$.
Since $\rho(M_{\alpha \b 1})=4\,\alpha$, in what follows we set  ${\b\alpha}^{(0)} = {\b 1}/5$ in order to ensure $\rho(M_{{\b \alpha}^{(0)}})=4/5 <1$.

\subsection*{Execution time and iteration count}
With this first set of experiments we want to show feasibility of Alg.\ref{alg:1}. We 
randomly selected  ten vectors $\b\alpha^{(i)}$, $i=1,\ldots,10$ such that $\rho(M_{\b\alpha^{(i)}})<1$ for all $i$ labelled so that   
$
\|{\b \alpha}^{(0)} -\b \alpha^{(1)}\|_2\geq \cdots \geq \|{\b \alpha}^{(0)} -\b \alpha^{(10)}\|_2 \, .
$
We computed the MD-HITS centrality via Alg.\ref{alg:1} for all the eleven choices of the exponents. 
Number of iterations and execution time for each of these vectors are shown in the table on the left of Fig.~\ref{fig:iter-time-alpha}. The right-hand plot of Fig.~\ref{fig:iter-time-alpha} shows how the number of iterations varies when $\b \alpha=\b 1\alpha$ and $\alpha$ ranges in $\{0.04,0.06, \dots,0.2\}$ (red dots). The black line plots the curve $c_1\log(\rho(M_{\b\alpha}))+c_2$ and is used to confirm the behavior predicted by Theorem \ref{thm:convergence}.    
\begin{figure}
    $\,$\hfill
    \begin{minipage}{.4\columnwidth}
    \vspace{-.5em}
     {\tiny \begin{tabular}{|c|cc|}
     \hline
        $\b \alpha^{(i)}$ & time(s) & it \\
        \hline
        $0$  & 180.1 & 18  \\
        $1$  & 212.6 & 18  \\
        $2$  & 236.1 & 20  \\
        $3$  & 187.8 & 16  \\
        $4$  & 199.7 & 17  \\
        $5$  & 235.5 & 20  \\
        $6$  & 352.9 & 30  \\
        $7$  & 282.4 & 24  \\
        $8$  & 129.3 & 11  \\
        $9$  & 293.8 & 25  \\
        $10$  & 223.4 & 19 \\
        \hline
        mean & 230.3 & 20\\
        \hline
     \end{tabular} }\end{minipage}
     \hfill
    \begin{minipage}{.4\columnwidth}\includegraphics[width=1\columnwidth]{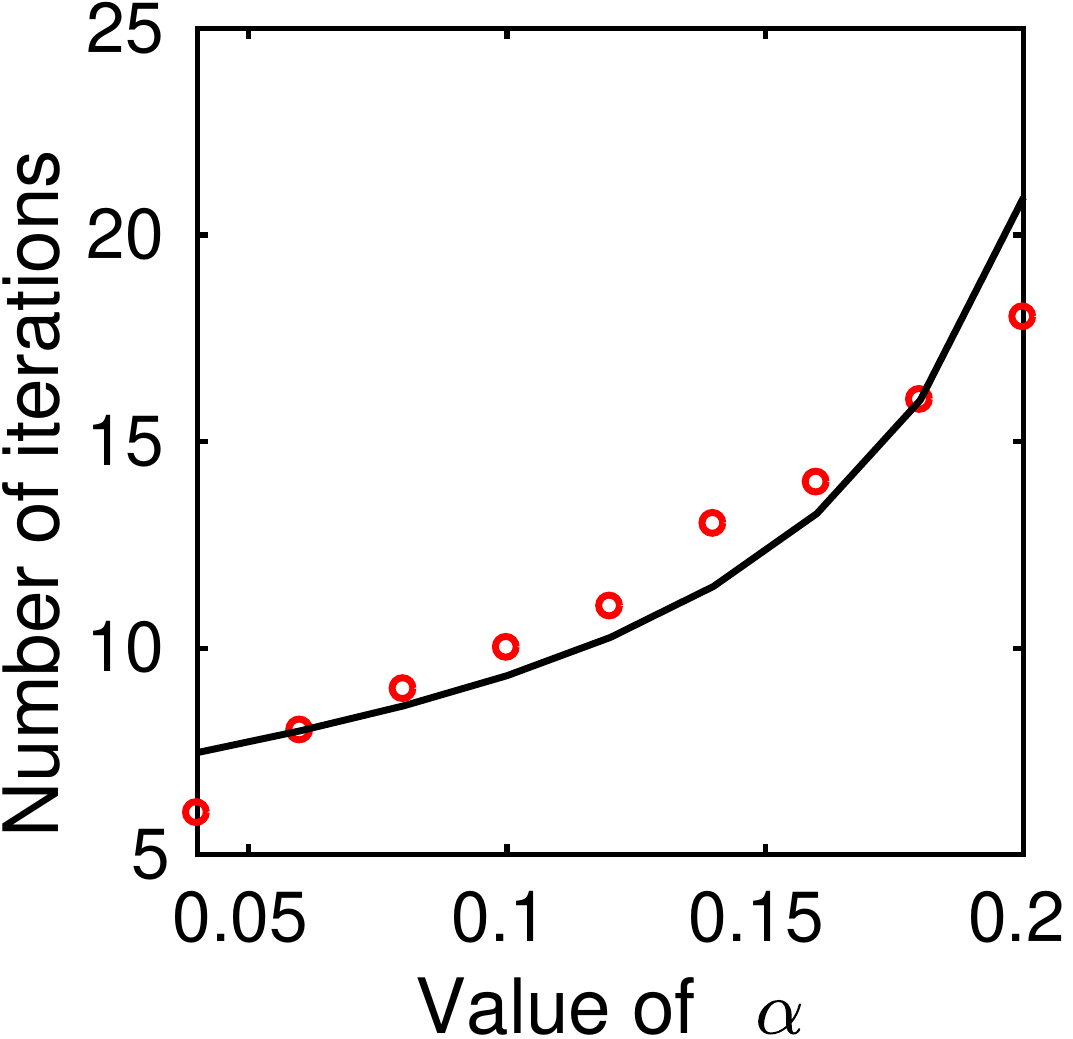}\end{minipage}\hfill$\,$
    \caption{Multilayer citation network. Left: execution times and number of iterations required for convergence of Alg.\ref{alg:1} for different $\b\alpha$s. Right: number of iterations required for convergence of Alg.\ref{alg:1} for ${\b \alpha}^{(0)} = \alpha\b 1$ for $\alpha\in\{0.04,0.06, \dots,0.2\}$ (circle) and predicted behavior (line); cf. Thm~\ref{thm:convergence}.}
    \label{fig:iter-time-alpha}
\end{figure}
Clearly, our algorithm requires only a small number of iterations and just a few seconds to compute the five centrality vectors for this dataset. 
\subsubsection*{Worst-case scenario analysis}
We now proceed to compare the derived rankings for the different choices of $\b\alpha$. 
 We restrict the following analysis to the worst-case setting by only considering the five exponents with the farthest distance from $\b\alpha^{(0)}$, i.e., $\b\alpha^{(1)},\ldots,\b\alpha^{(5)}$.
Experiments on the whole set $\b\alpha^{(1)},\ldots,\b\alpha^{(10)}$, not displayed here, showed  notably better results.
To compare rankings, 
we use the measure $\II_K:= 1-{\rm isim}_K(\L^{1},\L^2)$, where ${\rm isim}_K(\L^{1},\L^2)$ is the top $K$ intersection similarity   between the ranking vectors $\L^{1}$ and $\L^2$. 

The intersection similarity is a measure used to compare the top $K$ entries of two ranked lists that 
may not contain the same elements.  It is defined as follows: let $\L^{1}$ and $\L^2$ be two 
ranked lists, and let us call $\L^{j}_i$ the list of the top $i$ elements listed in $\L^{j}$, for  $j=1,2$. 
Then, the {\it top $K$ intersection similarity between $\L^{1}$ and $\L^2$} is defined as 
\begin{equation*}
{\rm isim}_K(\L^{1},\L^2) = \frac{1}{K}\sum_{i=1}^K\frac{|\L^{1}_i\Delta\L^2_i|}{2i},
\label{eq:isim}
\end{equation*}
where $|\L^{1}_i\Delta\L^2_i|$ denotes the cardinality of the set $\L^{1}_i\Delta\L^2_i$, which is the symmetric difference between $\L^{1}_i$ and $\L^2_i$. 
When the ordered sequences contained in $\L^{1}$ and $\L^2$ are completely different at level $K$, then $\II_K = 0$. 
On the other hand, $\II_K = 1$ when the top $K$ entries of the two ordered ranking lists coincide. 
Thus, the higher the value of $\II_K$, the better agreement between the top $K$ rankings provided by the~two~lists (see e.g.\ \cite{FKS03}).

\begin{figure}[t]
    \centering
    \includegraphics[width=1\columnwidth,clip,trim=0 0 0 0]{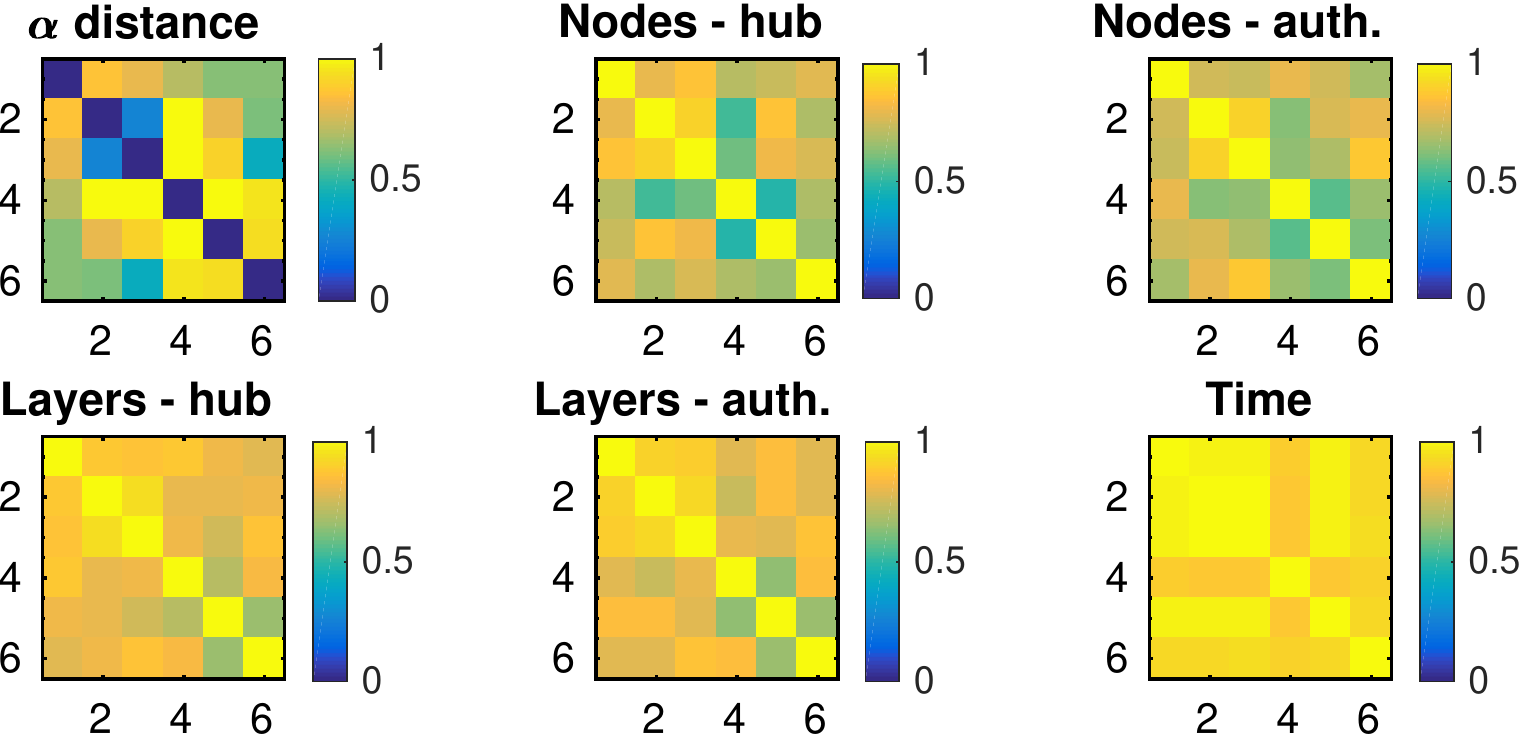}
    \caption{Top left: (Euclidean) distance matrix $D$ for different choices of $\b\alpha$. The first row/column is the distance from ${\b \alpha}^{(0)}$. Other plots: $\II_{K}$ (top center/right and bottom left/center: $K=100$; bottom right: $K=n_T$) between every pair of ranking vectors derived from MD-HITS with exponents in $\{{\b \alpha}^{(0)},\b\alpha^{(1)},\ldots,\b\alpha^{(5)}\}$. }
    \vspace{-1em}
    \label{fig:5choices-alpha}
\end{figure}

The top left plot in Fig.~\ref{fig:5choices-alpha} shows the 
distance matrix $D\in\RR^{6\times 6}$, whose entries $(D)_{st} =\|{\b \alpha}^{(s-1)}-{\b \alpha}^{(t-1)}\|_2$ for all $s,t=1,\dots,6$
are the distances between pairs of exponents in the set $\{{\b \alpha}^{(0)}, \b\alpha^{(1)},\ldots,\b\alpha^{(5)}\}$.
In the remaining five plots of Fig.~\ref{fig:5choices-alpha} we display the value of 
$\II_K$ for the different ranking vectors, for all the possible pairs of choices of $\b\alpha$ 
ordered as in the matrix $D$, with $K = 100$ for the centralities of nodes and layers and $K=n_T$ for the importance of time stamps. 
From these plots we can clearly see that the rankings are all very similar, even though the selected vectors of exponents substantially differ. In particular, the first row (and thus column) of each of these plots displays high values of $\II_K$.
This confirms that the uniform choice $\b\alpha^{(0)}$ of exponents can be preferred to others without compromising the output of the algorithm.

\begin{figure}
    \centering
    \includegraphics[width=1\columnwidth]{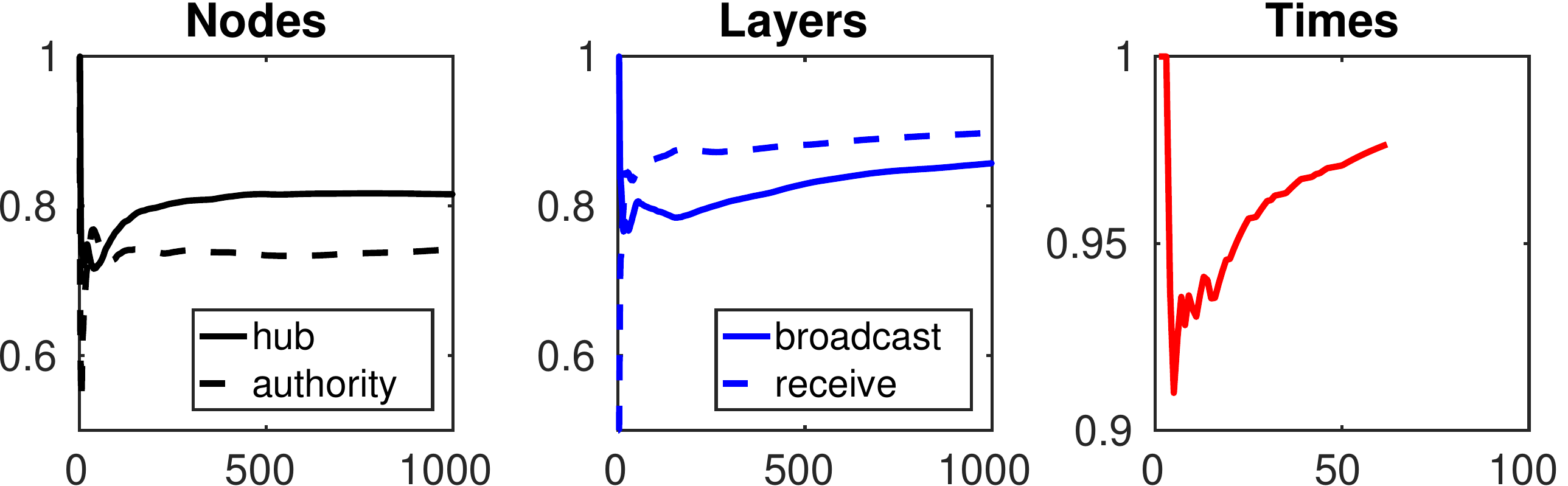}
    \caption{Evolution of $\II_K$ between the rankings derived from the centralities computed with ${\b \alpha}^{(0)}$ and the median of those computed using $\b\alpha^{(1)},\ldots,\b\alpha^{(5)}$. Left: node centrality, $K = 1,\ldots,1000$; center: layer centrality, $K = 1,\ldots, 1000$; right: time centrality, $K = 1,\ldots,n_T$. }
    \vspace{0em}
    \label{fig:isim-unif}
\end{figure}

\begin{figure*}[!h]
$\,$\hfill
    \begin{minipage}{.88\textwidth}
    \centering
    \includegraphics[width=1\textwidth,clip]{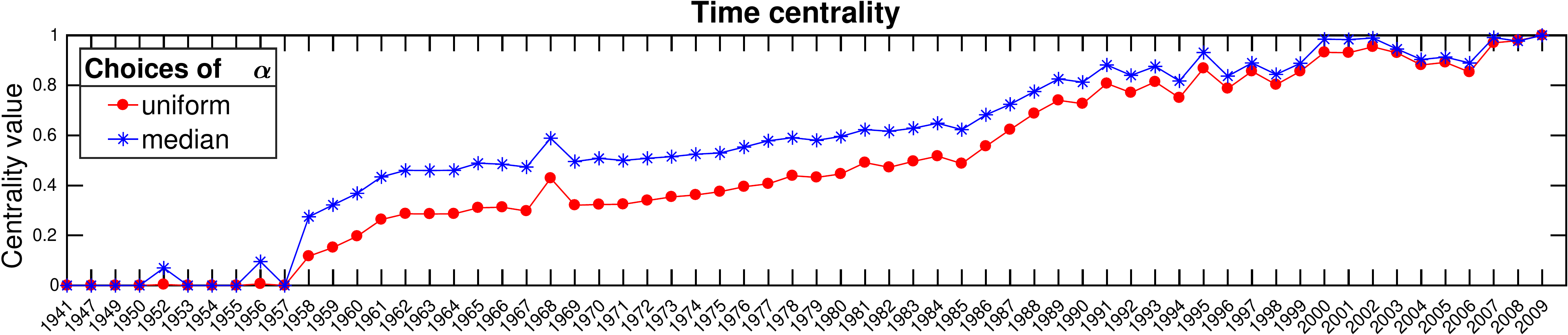}
    \end{minipage}
    \hfill
    \begin{minipage}{.11\textwidth}
    \begin{tabular}{|c|}
    \hline
    Kendall $\tau$\\
    \hline
    {\scriptsize $\alpha^{(1)}\,\,$0.9646} \\
    {\scriptsize $\alpha^{(2)}\,\,$0.9430} \\
    {\scriptsize $\alpha^{(3)}\,\,$0.9686} \\
    {\scriptsize $\alpha^{(4)}\,\,$0.9273} \\
    {\scriptsize $\alpha^{(5)}\,\,$0.9489} \\
    \hline
    \end{tabular}
    \end{minipage}
    \caption{Left: centrality scores for the time stamps in the citation dataset for $\b \alpha^{(0)} = (1,\dots,1)/5$ (red dots) and median of the centrality scores obtained with $\{\b\alpha^{(1)}, \ldots,
    \b\alpha^{(5)}\}$ (blue stars). Right: Kendall $\tau$ correlation coefficient between the years centrality for $\b \alpha^{(0)}$ and every other $\b \alpha^{(1)}, \dots, \b \alpha^{(5)}$.}
    \label{fig:my_label}
\end{figure*}

In Fig.~\ref{fig:isim-unif} we display the evolution of the measure $\II_K$  between the rankings obtained with  ${\b \alpha}^{(0)}$ and the median of the centrality vectors computed with $\{\b  \alpha^{(1)},\ldots, \b\alpha^{(5)}\}$ for $K=1,2,\ldots,n$. Here, $n=1000$ for node and layer centrality vectors and $n=n_T$ for the importance vector of the time stamps.  
A solid line depicts the evolution of $\II_K$ for $\b h$ (left), $\b b$ (center), and $\b \tau$ (right), while a dashed line displays the evolution of $\II_K$ for $\b a$ (left) and $\b r$ (center). 
Overall, the behavior of $\II_K$ further  demonstrates the  robustness of the model with respect to the choice of exponents~$\b\alpha$.

In Fig.~\ref{fig:my_label} we display in chronological order the scores of each time stamp for the vector computed with ${\b \alpha}^{(0)}$ (circle) and for the median of the vectors computed using the exponents in $\{\b  \alpha^{(1)},\ldots, \b\alpha^{(5)}\}$ (star).  
The actual centrality scores do not perfectly match in the two settings. However, the induced rankings almost coincide thus confirming the behaviour observed in Fig.~\ref{fig:isim-unif}. 
This is further supported by the very high values achieved by the Kendall $\tau$ correlation coefficients (right of Fig.~\ref{fig:my_label}) between the rankings obtained with $\b\alpha^{(0)}$ and any other choice $\b\alpha^{(i)}$.  

The \textit{quality} of the derived rankings is not easily quantifiable, as there are no objective criteria to rely on. Moreover, for this specific dataset,  domain specific knowledge would be required for an assessment. 
We can however comment on the ranking derived from our time centrality.
Fig.~\ref{fig:my_label} shows that recent years have a higher percentage of importance compared to earlier times. 
This is consistent with what one would expect as 
1) the volume of papers published per year has recently considerably increased, and 2)   research papers are far more easily accessible now than in earlier times, making it easier for researchers to cite each other.

\subsection{FAO Dataset}
We now move on to the analysis of the FAO Dataset 2010 \cite{de2015structural}. 
This static network ($n_T = 1$) contains $n_V = 214$ nodes representing nations in the world, $318346$ directed edges between the nodes, and $n_L = 364$ layers representing goods. An edge between two nodes represents an import/export relationship of a specific good between the two countries. 
There are no edges across layers. 
Every country considered in this dataset exports at least one product, that is $\sum_{j,k}\A_{ijk}\neq 0$ for any $i=1,\dots,n_V$. 
On the other hand, there are 81 countries which do not import, resulting in 81 zero unfoldings $\sum_{i,k}\A_{ijk}=0$. 

We computed the MD-HITS centrality (with ${\b \alpha}^{(0)}$) on this rather sparse, disconnected network and compared it with available techniques for multiplex graphs: aggregate degree~\cite{BNL14}, aggregate HITS~\cite{SRC13}, and eigenvector versatility~\cite{DSOGA15}.  
In Fig. \ref{fig:scatter} we display the scatter plots of the MD-HITS vectors $\b h$ (top) and $\b a$ (bottom), versus the other centrality measures. 
It can be clearly seen from these plots that the rankings provided by the available techniques differ from the ones returned by MD-HITS. 
Since the aggregate graph of this dataset consists of $82$ strongly connected components, aggregate HITS and eigenvector versatility are not well defined. 
This results in an ambiguity in the centrality vectors computed, since there is more than one possible solution (here we are displaying the one obtained using one run of Matlab's built-in function {\tt eigs}). 
Moreover, these measures assign zero score to several non-negligible nodes: eigenvector versatility, e.g., incorrectly assigns zero broadcasting score to $26$ nodes that have positive aggregate outdegree (see top-right of Fig.~\ref{fig:scatter}). 
On the other hand, MD-HITS assigns a unique and positive hub score to all the nodes; moreover, it assigns positive authority score to every node except for exactly those $81$  which correspond to countries that do not import goods.  
Finally, note that MD-HITS is also the only centrality measure that returns a ranking of the layers, allowing for a better interpretation of the results. 
As an example, we see that for all centrality measures the top ranked hub is the USA, while the top receiver is China for all measures but eigenvector versatility, that ranks Canada first. 
This result may seem strange if we do not consider the importance of layers. 
From the MD-HITS layer centrality it can be seen that the most important food product in this import/export network is soybeans, confirming that China (resp.,\ USA) is the most important receiver (resp.,\ broadcaster)  in the network, as it imported \$22.6B worth of soybeans in 2010, mainly from the USA~\cite{mit_OEC}. 
The second country exporting soybeans to China was Brazil, identified as the second best hub by MD-HITS (ranked third according to aggregate degree, sixth by aggregate HITS, and not amongst the top ten according to eigenvector versatility). 
These results, together with the fact that the other eigenvector-based measures are not well defined for this dataset, showcase the many advantages of  MD-HITS over previously proposed eigenvector-based models.


\begin{figure}[t]
\begin{minipage}{1\columnwidth}
\centering
\includegraphics[width=1\textwidth]{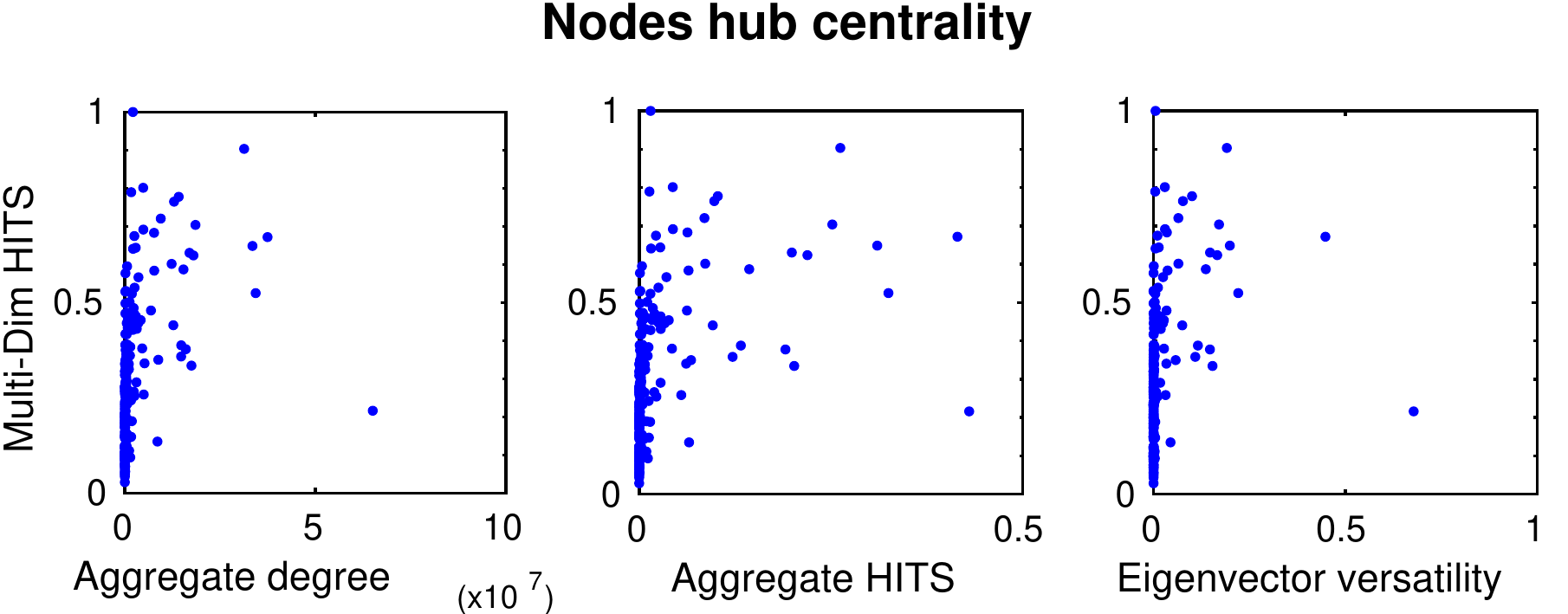}
\end{minipage}\\
\begin{minipage}{1\columnwidth}
\centering
\includegraphics[width=1\textwidth]{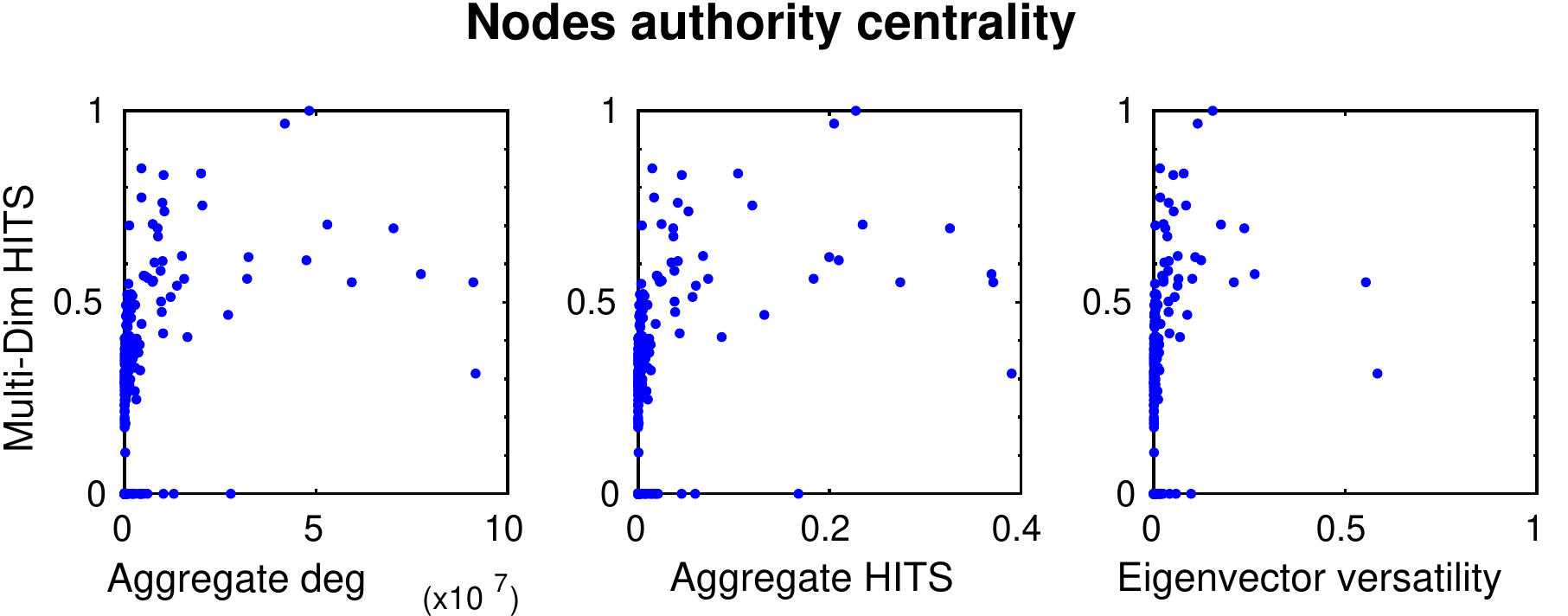}
\end{minipage}
\caption{Scatter plot of the centrality vectors for nodes computed via MD-HITS versus aggregate degree, aggregate eigenvector, and eigenvector versatility. Top: hub centrality. Bottom: authority centrality}
\label{fig:scatter}
\vspace{0em}
\end{figure}

\section{Conclusions}
We introduced a new ranking model for temporal directed multilayer networks, extending the mutually reinforcing nature of HITS algorithm to this framework. 
The new centrality vectors are always computable for nonnegative tensors and global convergence of the algorithm is always guaranteed in practical situation thanks to the introduction of nonlinearity in the model. 
Numerical experiments on real world networks demonstrate the scalability and illustrate the potential of the proposed ranking algorithm. 

\section*{Acknowledgements}
EPSRC Data Statement: the code and data used in this work is publicly available at 
{\small\tt https://github.com/ftudisco/multi-dimensional-hits}. 

\vspace{2em}


\begin{thebibliography}{10}

\bibitem{TTB_Software}
B.~W. Bader, T.~G. Kolda, et~al.
\newblock Matlab tensor toolbox version 2.6, 2015.

\bibitem{BNL14}
F.~Battiston, V.~Nicosia, and V.~Latora.
\newblock Structural measures for multiplex networks.
\newblock {\em Physical Review E}, 89:032804, 2014.

\bibitem{benson2018three}
{A.~R. Benson}, { Three hypergraph eigenvector centralities},
  arXiv:1807.09644,  (2018).

\bibitem{BEK13}
M. Benzi, E. Estrada, and C. Klymko.
\newblock Ranking hubs and authorities using matrix functions.
\newblock {\em Linear Algebra Appl.}, 438(5):2447--2474, 2013.

\bibitem{BHM80}
R.~A. Brualdi, F. Harary, and Z. Miller.
\newblock Bigraphs versus digraphs via matrices.
\newblock {\em Journal of Graph Theory}, 4(1):51--73, 1980.

\bibitem{de2015structural}
M.~De~Domenico, V.~Nicosia, A.~Arenas, and V.~Latora.
\newblock Structural reducibility of multilayer networks.
\newblock {\em Nature Communications}, 6:6864, 2015.

\bibitem{DSOGA15}
M.~De~Domenico, A.~Sol{\'e}-Ribalta, E.~Omodei, S.~G{\'o}mez, and A.~Arenas.
\newblock Ranking in interconnected multilayer networks reveals versatile
  nodes.
\newblock {\em Nature Communications}, 6, 2015.

\bibitem{deng2009generalized}
H.~Deng, M.~R. Lyu, and I.~King.
\newblock A generalized co-hits algorithm and its application to bipartite
  graphs.
\newblock In {\em Proceedings of the 15th ACM SIGKDD international conference
  on Knowledge Discovery and Data Mining}, pages 239--248, 2009.

\bibitem{FKS03}
R.~Fagin, R.~Kumar, and D.~Sivakumar.
\newblock Comparing top k lists.
\newblock {\em SIAM Journal on Discrete Mathematics}, 17(1):134--160, 2003.

\bibitem{farahat2006authority}
A. Farahat, T. LoFaro, J. Miller, G. Rae, and L.~A. Ward.
\newblock Authority rankings from hits, pagerank, and salsa: Existence,
  uniqueness, and effect of initialization.
\newblock {\em SIAM Journal on Scientific Computing}, 27(4):1181--1201, 2006.

\bibitem{gautier2018contractivity}
A.~Gautier and F.~Tudisco.
\newblock The contractivity of cone--preserving multilinear mappings.
\newblock {\em arXiv:1808.04180}, 2018.

\bibitem{gautier2017perron}
A.~Gautier, F.~Tudisco, and M.~Hein.
\newblock The {P}erron--{F}robenius theorem for multi-homogeneous mappings.
\newblock {\em arXiv:1801.05034}, 2017.

\bibitem{gautier2017tensor}
A.~Gautier, F.~Tudisco, and M.~Hein.
\newblock A unifying {P}erron--{F}robenius theorem for nonnegative tensors via
  multi-homogeneous maps.
\newblock {\em arXiv:1801.04215}, 2018.

\bibitem{higham2017overview}
D.~J. Higham, M.~Batty, L.~M.~A. Bettencourt, D.~V. Greetham, and P.~Grindrod.
\newblock An overview of city analytics.
\newblock {\em Royal Society Open Science}, 4:161063, 2017.

\bibitem{BookHJ}
R.~A. Horn and C.~R. Johnson.
\newblock {\em Matrix Analysis}.
\newblock Cambridge University Press, {Second} edition, 2012.

\bibitem{kanawati2014seed}
R.~Kanawati.
\newblock Seed-centric approaches for community detection in complex networks.
\newblock In {\em International Conference on Social Computing and Social
  Media}, pages 197--208, 2014.

\bibitem{K99}
J.~M. Kleinberg.
\newblock Authoritative sources in a hyperlinked environment.
\newblock {\em Journal of the ACM (JACM)}, 46(5):604--632, 1999.

\bibitem{kolda2005higher}
T.~G Kolda, B.~W. Bader, and J.~P. Kenny.
\newblock Higher-order web link analysis using multilinear algebra.
\newblock In {\em IEEE International Conf. on Data Mining}, page~8, 2005.

\bibitem{BookLN}
B.~Lemmens and R.~Nussbaum.
\newblock {\em Nonlinear {P}erron-{F}robenius Theory}, volume 189.
\newblock Cambridge University Press, 2012.

\bibitem{li2012har}
X.~Li, M.~K Ng, and Y.~Ye.
\newblock H{AR}: hub, authority and relevance scores in multi-relational data
  for query search.
\newblock In {\em Proceedings of the 2012 SIAM international conference on Data
  Mining}, pages 141--152, 2012.

\bibitem{li2016multivcrank}
X.~Li, Y.~Ye, and M.~K. Ng.
\newblock Multi{VCR}ank with applications to image retrieval.
\newblock {\em IEEE Transactions on Image Processing}, 25:1396--1409, 2016.

\bibitem{lim2005singular}
L.-H. Lim.
\newblock Singular values and eigenvalues of tensors: a variational approach.
\newblock In {\em 1st IEEE International Workshop on Computational Advances in
  Multi-Sensor Adaptive Processing}, pages 129--132, 2005.

\bibitem{ng2011multirank}
M.~Ng, X.~Li, and Y.~Ye.
\newblock Multi{R}ank: co-ranking for objects and relations in multi-relational
  data.
\newblock In {\em Proceedings of the 17th ACM SIGKDD international conference
  on Knowledge Discovery and Data Mining}, pages 1217--1225, 2011.

\bibitem{Page98}
L.~Page, S.~Brin, R.~Motwani, and T.~Winograd.
\newblock The {P}age{R}ank citation ranking: Bringing order to the web.
\newblock {T}echnical {R}eport, Stanford University, 1998.

\bibitem{rahmede2017centralities}
C.~Rahmede, J.~Iacovacci, A.~Arenas, and G.~Bianconi.
\newblock Centralities of nodes and influences of layers in large multiplex
  networks.
\newblock {\em Journal of Complex Networks}, in press, 2017.

\bibitem{rendle2009learning}
S.~Rendle, Leandro Balby~M., A.~Nanopoulos, and L.~Schmidt-Thieme.
\newblock Learning optimal ranking with tensor factorization for tag
  recommendation.
\newblock In {\em Proceedings of the 15th ACM SIGKDD international conference
  on Knowledge Discovery and Data Mining}, pages 727--736, 2009.

\bibitem{mit_OEC}
A. Simoes.
\newblock The observatory of economic complexity, {\tt
  https://atlas.media.mit.edu/en/}, 2010.

\bibitem{SRC13}
L.~Sol{\'a}, M.~Romance, R.~Criado, J.~Flores, A.~G. del Amo, and
  S.~Boccaletti.
\newblock Eigenvector centrality of nodes in multiplex networks.
\newblock {\em Chaos}, 23:033131, 2013.

\bibitem{sole2014centrality}
A.~Sol{\'e}-Ribalta, M.~De~Domenico, S.~G{\'o}mez, and A.~Arenas.
\newblock Centrality rankings in multiplex networks.
\newblock In {\em Proceedings of the 2014 ACM conference on Web science}, pages
  149--155, 2014.

\bibitem{sun2005cubesvd}
J.-T. Sun, H.-J. Zeng, H.~Liu, Y.~Lu, and Z.~Chen.
\newblock Cubesvd: a novel approach to personalized web search.
\newblock In {\em Proceedings of the 14th international conference on World
  Wide Web}, pages 382--390. ACM, 2005.

\bibitem{citnetdata}
J. Tang, J. Zhang, L. Yao, J. Li, L.~Zhang, and Z. Su.
\newblock Citation network dataset.

\bibitem{Tang:08KDD}
J. Tang, J. Zhang, L. Yao, J. Li, L.~Zhang, and Z. Su.
\newblock Arnet{M}iner: Extraction and mining of academic social networks.
\newblock In {\em KDD'08}, pages 990--998, 2008.

\bibitem{tudisco2017node}
F.~Tudisco, F.~Arrigo, and A.~Gautier.
\newblock Node and layer eigenvector centralities for multiplex networks.
\newblock {\em SIAM J. Applied Mathematics}, 78:853--876, 2018.

\bibitem{varga2010gervsgorin}
R.~S. Varga.
\newblock {\em Ger{\v{s}}gorin and his circles}, volume~36.
\newblock Springer Science \& Business Media, 2010.

\bibitem{zhou2007co}
D.~Zhou, S.~A. Orshanskiy, H.~Zha, and C.~L. Giles.
\newblock Co-ranking authors and documents in a heterogeneous network.
\newblock In {\em 7th IEEE International Conference on Data Mining}, pages
  739--744, 2007.

\end{thebibliography}
\end{document}